\definecolor{links}{RGB}{11, 85, 255}
\definecolor{cites}{RGB}{0, 200, 0}
\definecolor{urls}{RGB}{255, 116, 0}
\pgfplotsset{compat=1.14}
\newcommand{\cS}{\mathcal{S}}
\newcommand{\R}{\mathbb{R}}
\newcommand{\OPT}{\textup{OPT}}
\newcommand{\ALG}{\textup{ALG}}
\newcommand{\ind}[1]{\mathbbm{1}\left\{#1\right\}}
\newcommand{\Ccal}{\mathcal{C}}
\newcommand{\clfrac}[2]{\left\lceil\frac{#1}{#2}\right\rceil}
\newcommand{\act}[1]{#1_{act}}
\newcommand{\excendp}[1]{#1_{o}}  \newcommand{\indicator}[1]{\mathbbm{1}[#1]}
\newtheorem{theorem}{Theorem}[section]
\newtheorem*{theorem*}{Theorem}
\newtheorem{lemma}[theorem]{Lemma}
\newtheorem{proposition}[theorem]{Proposition}
\newtheorem{definition}[theorem]{Definition}
\theoremstyle{plain}
\theoremstyle{plain}
\theoremstyle{plain}
\theoremstyle{plain}
\theoremstyle{plain}
\definecolor{zielony}{rgb}{0.0, 0.65, 0.31}
\colorlet{srednizielony}{zielony!25}
\colorlet{jasnypomarancz}{orange!20}
\definecolor{babyblue}{rgb}{0.54, 0.81, 0.94}
\colorlet{niebieski_mazak}{babyblue!50}
\colorlet{zielony_mazak}{srednizielony!70}
\colorlet{pomaranczowy_mazak}{jasnypomarancz!70} \def\widt {2.5}
\def\heig {3.5}
\newenvironment{graph}[0]{
    \begin{center}
    \begin{scriptsize}
    \begin{tikzpicture}[
scale=0.55,   every node/.style={scale=0.8},
        every text node part/.style={align=center},
graph-node/.style={circle, draw=black, line width=0.5pt},
    ]
}{
    \end{tikzpicture}
    \end{scriptsize}
    \end{center}
}
\newcommand{\drawnode}[5]{
    \node[graph-node, draw=#4, fill=#5] (#1) at (#2) {$#3$};
}
\newcommand{\drawnodesubs}[6]{
    \node[graph-node, draw=#5, fill=#6] (#1) at (#2) {$#3_#4$};
}
\newcommand{\drawtriangle}[4]{
    \path [draw=#3,fill=#4] (#1) -- ($(#1)-(#2*\widt,#2*\heig)$) -- ($(#1)-(0,#2*\heig)+(#2*\widt,0)$) -- cycle;
}
\newcommand{\drawsnakeline}[4]{
  \path [draw=#4, line width=3pt] (#1) -- (#2);
  \path [draw=#3, dashed] (#1) -- (#2);
} 
\begin{document}

\begin{titlepage}
\def\thepage{}
\thispagestyle{empty}

\title{Universal Optimization for Non-Clairvoyant Subadditive Joint Replenishment}

\author{Tomer Ezra\thanks{Harvard University,
Email:  \texttt{tomer@cmsa.fas.harvard.edu
}} \and Stefano Leonardi\thanks{Sapienza University of Rome, Email:  \texttt{leonardi@diag.uniroma1.it}} \and Michał Pawłowski\thanks{MIMUW, University of Warsaw and IDEAS NCBR, Email: \texttt{michal.pawlowski196@gmail.com}} \and Matteo Russo\thanks{Sapienza University of Rome, Email:  \texttt{mrusso@diag.uniroma1.it}} \and Seeun William Umboh\thanks{School of Computing and Information Systems, 
The University of Melbourne and ARC Training Centre in Optimisation Technologies, Integrated Methodologies, and Applications (OPTIMA), Email: \texttt{william.umboh@unimelb.edu.au}}}

\date{}

\maketitle
\begin{abstract}
The online joint replenishment problem (JRP) is a fundamental problem in the area of online problems with delay. Over the last decade, several works have studied generalizations of JRP with different cost functions for servicing requests. Most prior works on JRP and its generalizations have focused on the clairvoyant setting. 
Recently, Touitou \cite{Touitou23} developed a non-clairvoyant framework that provided an $O(\sqrt{n \log n})$ upper bound for a wide class of generalized JRP, where $n$ is the number of request types.

We advance the study of non-clairvoyant algorithms by providing a simpler, modular framework that matches the competitive ratio established by Touitou for the same class of generalized JRP. 
Our key insight is to leverage universal algorithms for Set Cover to approximate arbitrary monotone subadditive functions using a simple class of functions termed \textit{disjoint}. This allows us to reduce the problem to several independent instances of the TCP Acknowledgement problem, for which a simple 2-competitive non-clairvoyant algorithm is known.  The modularity of our framework is a major advantage as it allows us to tailor the reduction to specific problems and obtain better competitive ratios. In particular, we obtain tight $O(\sqrt{n})$-competitive algorithms for two significant problems: Multi-Level Aggregation and Weighted Symmetric Subadditive Joint Replenishment. We also show that, in contrast, Touitou's algorithm is $\Omega(\sqrt{n \log n})$-competitive for both of these problems.

\end{abstract}

\end{titlepage}

\section{Introduction}
Online problems with delay have received much attention in the last few years. An important family of online problems with delay consists of the Joint Replenishment Problem (JRP) and its variants. A typical instance consists of a sequence of requests that arrive over time. Each request can be one of $n$ request types, and the cost of serving a set of requests is a subadditive\footnote{A set function over a ground set $U$ is \emph{subadditive} if $f(A) + f(B) \geq f(A \cup B)$ for every $A,B \subseteq U$.} function of their types. We assume that the algorithm has oracle access to the service cost function. Requests do not need to be served on arrival but each request accumulates a delay cost while unserved. In particular, each request $q$ has an associated delay cost function $d_q$ and its delay cost is $d_q(t)$ if it is served at time $t$. The goal of the problem is to serve all requests minimizing the total service cost and delay cost. An important special case is the deadline case; this is when requests do not incur delay cost but instead must be served by some given time. 
We call this family of problems \emph{Subadditive JRP}.

These problems can be studied under the clairvoyant and non-clairvoyant settings. In the \emph{clairvoyant} setting, when a request $q$ arrives, the algorithm is given the entire delay cost function $d_q$ (or its deadline in the case of deadlines). In contrast, in the \emph{non-clairvoyant} setting, the algorithm only knows of the delay cost accumulated so far. In the case of deadlines, the algorithm only knows whether the request's deadline is now (and must be served immediately) or later. 

Most previous works on Subadditive JRP have focused on the clairvoyant setting. Key problems within the family of Subadditive JRP include (in increasing order of generality): TCP Acknowledgement~\cite{dynamictcp, DoolyGS01, BuchbinderJN07}, Joint Replenishment Problem~\cite{BuchbinderKLMS08, BritoKV12, BienkowskiBCJS13, CJRP}, and Multi-Level Aggregation (MLA) \cite{BuchbinderFNT17, AzarT19, BienkowskiBBCDF21, BienkowskiBBCDF20, McMahan-MLA}. For general subadditive service cost functions, deterministic $O(\log N)$ (where $N$ is the number of requests) and $O(\log n)$ upper bounds are known (\cite{CarrascoPSV18} and \cite{AzarT20}, respectively).

There is much less work in the non-clairvoyant setting. For a small number of problems, such as TCP Acknowledgement and Set Cover with Delay~\cite{AzarCKT20}, clairvoyance is not required in the sense that the same competitive ratio can be attained in both the clairvoyant and non-clairvoyant settings. However, Azar et al.~\cite{OSD}'s lower bound for Online Service with Delay (a different family of online problems with delay) can be translated into an $\Omega(\sqrt{n})$ lower bound against deterministic algorithms for JRP, and thus, MLA and Subadditive JRP. In contrast, clairvoyant Subadditive JRP has a $O(\log n)$ competitive ratio \cite{AzarT20}. Recently, Le et al.~\cite{LeUX23} showed that randomization does not help in breaking the $\Omega(\sqrt{n})$ barrier and also developed algorithms for JRP and MLA with matching and nearly-matching upper bounds. Shortly after, Touitou~\cite{Touitou23} presented a general non-clairvoyant framework for Subadditive JRP with a deterministic $O(\sqrt{n \log n})$ competitive ratio. 

 \subsection{Our Results}
Our main contribution is a simple, modular framework for non-clairvoyant Subadditive JRP that matches the current-best competitive ratio of $O(\sqrt{n \log n})$, and yields tight $O(\sqrt{n})$ competitive ratios for the key problems of Multi-Level Aggregation and Weighted Symmetric Subadditive Joint Replenishment. We also show that the framework of Touitou~\cite{Touitou23} is $\Omega(\sqrt{n \log n})$ for these problems. We now formally define these problems and state our results.

\subsubsection{General Framework for Subadditive JRP}

\paragraph*{Subadditive JRP.} We have a set $U$ of $n$ request types and a monotone non-decreasing, subadditive \emph{service function} $f : 2^U \mapsto \mathbb{R}_{\geq 0}$ that satisfies $f(\emptyset) = 0$. Requests $q$ arrive over time. Each request $q$ has a type $h_q \in U$, an arrival time $a_q$, and a non-decreasing, continuous delay function $d_q$. At any point in time, the algorithm can serve a subset $Q$ of the requests that have arrived and incur a service cost of $f(S_Q)$ where $S_Q = \{h_q : q \in Q\}$ is the set of types of $Q$. Let $C_q$ be the time when request $q$ was served. The \emph{delay cost} of request $q$ is $d_q(C_q)$.\footnote{We assume W.L.O.G. that $d_q(a_q) = 0$, i.e.,~serving a request immediately on arrival incurs no delay cost.} The goal is to serve all requests while minimising the sum of the total service and delay costs.

\paragraph{Approximating set functions.} The core idea underlying our framework is the following simple but powerful observation. Given two set functions $f, g$ over the same ground set $U$ of $n$ elements, we say that $g$ is an \emph{$\alpha$-approximation} of $f$ if $f(S) \leq g(S) \leq \alpha f(S)$ for every $S \subseteq U$. Our observation is that for a given subadditive service function $f$, if we can $\alpha$-approximate $f$ by a simpler service function $g$, then we can reduce any instance of Subadditive JRP with service function $f$ to one with $g$ instead. In fact, this leads us to the following simplification of the problem.

\paragraph{Disjoint TCP Acknowledgement.} In Disjoint TCP Acknowledgement, we have a set $U$ of $n$ request types. We also have a partition of $U$ into subsets $S_1, \ldots, S_k$ with costs $c_1, \ldots, c_k$. For a subset $S \subseteq U$, we have $f(S) = \sum_{i = 1}^k c_i \cdot \ind{S_i \cap S \neq \emptyset}$. In other words, we pay $c_i$ for every part $S_i$ that intersects with $S$. Such a function is called a \emph{disjoint service function}. Observe that when $k = 1$, this is equivalent to the TCP Acknowledgement problem; when $k > 1$, this corresponds to several independent instances of TCP Acknowledgement. The 2-competitive algorithm for TCP Acknowledgement of \cite{DoolyGS01} can be easily extended to a 2-competitive algorithm for Disjoint TCP Acknowledgement (see Section~\ref{sec:reduction}).

We now state our main technical lemma.
\begin{restatable}[Reduction Lemma]{lemma}{reduction}
    \label{lem:reduction-to-disjoint}
If there exists a \textbf{disjoint} service function $g$ that $\alpha$-approximates $f$, then there exists a non-clairvoyant algorithm that is $2\alpha$-competitive non-clairvoyant algorithm for every Subadditive JRP instance with service cost function $f$. 
\end{restatable}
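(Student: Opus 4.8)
The plan is to exploit the $\alpha$-approximation pointwise and run the known Disjoint TCP Acknowledgement algorithm on a surrogate instance. First I would set up the reduction: given a Subadditive JRP instance $\cI$ with service function $f$, and a disjoint service function $g$ that $\alpha$-approximates $f$, define a new instance $\cI'$ that is identical to $\cI$ except that the service cost function is $g$ instead of $f$. Since $g$ is disjoint, $\cI'$ is an instance of Disjoint TCP Acknowledgement (with the same request types, arrival times, and delay functions), so by the extension of the Dooly--Goldman--Scott algorithm mentioned in the excerpt there is a $2$-competitive non-clairvoyant algorithm $\cA'$ for $\cI'$. I would then define my algorithm $\cA$ for $\cI$ to simply \emph{simulate} $\cA'$: whenever $\cA'$ serves a subset $Q$ of requests (incurring service cost $g(S_Q)$ in the surrogate world), $\cA$ serves the same subset $Q$ (incurring service cost $f(S_Q)$ in the real world). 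This is legal in the non-clairvoyant setting because $\cA'$ only ever needs the delay accumulated so far and whether deadlines are due, which is exactly the information $\cA$ has; no clairvoyance about $f$ versus $g$ is needed since both functions are available via oracle and the decisions depend only on the delay side.

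Next I would bound the cost. Let $\ALG$, $\ALG'$ denote the costs of $\cA$ on $\cI$ and $\cA'$ on $\cI'$ respectively, and let $\OPT$, $\OPT'$ be the corresponding optimal offline costs. The delay cost paid by $\cA$ equals that paid by $\cA'$ since the service times are identical. For the service cost, each service event of $\cA$ costs $f(S_Q) \le g(S_Q)$, so summing over all service events, $\ALG \le \ALG'$. By $2$-competitiveness of $\cA'$, $\ALG' \le 2\,\OPT'$. It therefore remains to relate $\OPT'$ to $\OPT$: I would take an optimal solution for $\cI$, which serves certain subsets at certain times, and use the \emph{same} schedule as a feasible solution for $\cI'$; its delay cost is unchanged, and each service event now costs $g(S_Q) \le \alpha f(S_Q)$, so $\OPT' \le \alpha\,\OPT$ (using $g \le \alpha f$ and $\alpha \ge 1$ so that delay costs, multiplied by $1 \le \alpha$, are also dominated — or more simply just bound the service part and leave delay untouched). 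Chaining the inequalities gives $\ALG \le \ALG' \le 2\,\OPT' \le 2\alpha\,\OPT$, which is the claim.

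The main obstacle, and the only place requiring care, is making the simulation argument airtight in the non-clairvoyant model: I must verify that $\cA'$, when run as a subroutine, never queries information about the instance that $\cA$ does not possess. Since $\cA$ and $\cA'$ face instances with identical request arrivals and identical (online-revealed) delay functions, and differ only in the service oracle — to which $\cA$ has full access — the simulation is well-defined; the disjointness of $g$ is used only to invoke the existence of the $2$-competitive algorithm $\cA'$, not in the simulation itself. A secondary point worth stating explicitly is that $f(S) \le g(S)$ (the lower bound in the $\alpha$-approximation) is what makes $\ALG \le \ALG'$ hold, while $g(S) \le \alpha f(S)$ is what makes $\OPT' \le \alpha\,\OPT$ hold; both directions of the approximation are genuinely needed. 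With these observations the proof is short, and I would present it in essentially the four displayed inequalities above.
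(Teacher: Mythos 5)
Your proof is correct and follows essentially the same route as the paper: replace $f$ by the disjoint surrogate $g$, invoke the $2$-competitive Disjoint TCP Acknowledgement algorithm, and chain $c^f(\cA') \le c^g(\cA') \le 2\,\OPT^g \le 2\alpha\,\OPT^f$ using the pointwise sandwich $f \le g \le \alpha f$ (which the paper proves implies $\OPT^f \le \OPT^g \le \alpha\,\OPT^f$). The only difference is presentational — you spell out the simulation/non-clairvoyance point and the three inequalities more explicitly than the paper does.
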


A major advantage of our Reduction Lemma is that it reduces the task of designing and analyzing an online algorithm for a Subadditive JRP problem to the much cleaner task of showing that the corresponding service function $f$ can be approximated by a disjoint service function well. In particular, this boils down to finding a partition of the set of request types $U$ into subsets $S_1, \ldots, S_k$, for some $k$, such that the following quantity is small 
\[\max_{S \subseteq U} \frac{\sum_{i = 1}^k f(S_i) \cdot \ind{S_i \cap S \neq \emptyset}}{f(S)}.\]

For general Subadditive JRP, our key insight is that the problem of approximating an arbitrary service function $f$ by a disjoint service function can be reformulated as the Universal Set Cover problem.

\paragraph{Universal Set Cover (USC).} An instance of the Universal Set Cover (USC) problem consists of a universe $U$ of $n$ elements, a collection $\Ccal$ of subsets of $U$, and costs $c(S)$ for each set $S \in \Ccal$. A solution is an assignment $a$ of each element $e$ to a set $a(e) \in \Ccal$. For any subset $X \subseteq U$, define $a(X) = \{a(e) : e \in X\}$. The \emph{stretch} of the assignment $a$ is $\max_{X \subseteq U} c(a(X)) / \OPT(X)$ where $\OPT(X)$ is the cost of the optimal set cover of $X$. 

Jia et al.~\cite{JiaLNRS05} introduced the Universal Set Cover problem and showed that a $O(\sqrt{n \log n})$-stretch assignment can always be found efficiently. We show that this implies that any subadditive service function $f$ can be approximated by a disjoint service function to within a factor of $O(\sqrt{n \log n})$ (Lemma~\ref{lem:usc}). Together with our Reduction Lemma, we get a deterministic $O(\sqrt{n \log n})$-competitive algorithm for Non-Clairvoyant Subadditive JRP, matching the current state-of-the-art~\cite{Touitou23}.

\subsubsection{MLA and Weighted Symmetric Subadditive JRP} 
\label{sec:ourtechniques}

One main technical contribution of the paper is to exploit the inherent structure of the MLA and  Weighted Symmetric Subadditive JRP functions to show that they can be $O(\sqrt{n})$-approximated by disjoint service functions. We then employ the Reduction Lemma to prove tight $O(\sqrt{n})$-competitive ratios for the two corresponding problems.

\paragraph*{Multi-Level Aggregation.} In the Multi-Level Aggregation (MLA) problem, the service function $f$ is defined by a rooted \emph{aggregation tree} $T$, where each node corresponds to a different request type. Let $r$ be the root of $T$ and let $c(v)$ be the cost of node $v$ for each $v \in T$. For a subset $V$ of nodes, $f(V)$ is defined to be the total cost of the nodes in the minimal subtree connecting $V$ to $r$.

\begin{restatable}{theorem}{nonclairvoyantmla}
    \label{thm:non-clairvoyant-mla}
    There exists an efficient deterministic $O(\sqrt{n})$-competitive algorithm for the Non-Clairvoyant Multi-Level Aggregation problem. 
\end{restatable}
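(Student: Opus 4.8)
The plan is to invoke the Reduction Lemma (Lemma~\ref{lem:reduction-to-disjoint}): it suffices to construct a \emph{disjoint} service function $g$ that $O(\sqrt{n})$-approximates the MLA service function $f$ determined by the aggregation tree $T$, since then Lemma~\ref{lem:reduction-to-disjoint} yields a $2\cdot O(\sqrt{n})=O(\sqrt{n})$-competitive non-clairvoyant algorithm, and the decomposition below is computable in polynomial time. Concretely I must partition the node set $U$ of $T$ into parts $S_1,\dots,S_k$ and set $c_i:=f(S_i)$; the resulting $g(S)=\sum_i c_i\,\ind{S_i\cap S\neq\emptyset}$ automatically satisfies $g(S)\ge f(S)$ for \emph{every} partition — by monotonicity and subadditivity of $f$, $g(S)=\sum_i f(S_i)\,\ind{S_i\cap S\neq\emptyset}\ge\sum_i f(S_i\cap S)\ge f\bigl(\bigcup_i(S_i\cap S)\bigr)=f(S)$ — so the whole task reduces to finding a partition with
\[
\sum_{i:\,S_i\cap S\neq\emptyset} f(S_i)\ \le\ O(\sqrt{n})\cdot f(S)\qquad\text{for all }S\subseteq U.
\]

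The core is a \emph{balanced decomposition} of $T$ with threshold $\tau=\lceil\sqrt{n}\rceil$. Recall that $f(V)$ is the cost of the minimal subtree $T(V)$ joining $V$ to the root $r$, and that a node $u$ lies in $T(V)$ exactly when some member of $V$ lies in the subtree rooted at $u$. I would carve $T$ into $O(\sqrt{n})$ parts of two kinds. \textbf{Branch bundles:} at a node $v$ whose pending subtree is large but all of whose children's subtrees are small (fewer than $\tau$ nodes, and, after a standard normalization of the instance, of controlled cost), greedily group the child-subtrees into bundles of at most $\tau$ nodes; such bundles are in general \emph{disconnected}, which is unavoidable — a star already shows that connected parts cannot be simultaneously few and cheap. \textbf{Path segments:} the remaining ``spine'' of heavy-path edges is cut into contiguous segments, refining whenever a segment's node count reaches $\tau$ or the root-path cost at its bottom exceeds a constant factor of that at its top. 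A bookkeeping argument (each node is bundled or segmented exactly once, and each bundle or segment consumes a $\Theta(1/\sqrt{n})$ fraction of a local budget) gives $k=O(\sqrt{n})$ parts, while by construction each part $S_i$ is \emph{local}: $f(S_i)\le O(\sqrt{n})\cdot f(\{w\})$ for every $w\in S_i$, and $T(S_i)$ is contained in $S_i$ together with the root-path of the top of $S_i$.

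For the analysis I would rewrite $\sum_{i:\,S_i\cap S\neq\emptyset} f(S_i)=\sum_{u}c(u)\,\mu_S(u)$, where $\mu_S(u)$ counts the parts $S_i$ that both meet $S$ and contain a node in the subtree of $u$. Nodes $u\in T(S)$ contribute at most $O(\sqrt{n})\sum_{u\in T(S)}c(u)=O(\sqrt{n})f(S)$, since trivially $\mu_S(u)\le k=O(\sqrt{n})$. For $u\notin T(S)$, any part counted in $\mu_S(u)$ must ``straddle'' the subtree of $u$ while meeting $S$; charging each such part's overflow to the portion of $T(S)$ pinned down by the element of $S$ it contains, and invoking the locality of parts together with $k=O(\sqrt{n})$, one checks that the total overflow is again $O(\sqrt{n})f(S)$. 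Adding the two contributions gives $g(S)\le O(\sqrt{n})f(S)$, and Lemma~\ref{lem:reduction-to-disjoint} completes the proof; tightness then follows from the known $\Omega(\sqrt{n})$ lower bound.

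The step I expect to be the main obstacle is the balanced decomposition and, in particular, the overflow bound. One must design the carving so that two competing demands hold for \emph{every} topology: \emph{few} parts, so that the multiplicity $\mu_S(u)$ stays $O(\sqrt{n})$ on $T(S)$, and \emph{small, local} parts, so that straddling parts contribute only $O(\sqrt{n})f(S)$ rather than something scaling with the global $f(U)$. Reconciling these in the two extreme regimes — long paths, where too coarse a cut makes a segment too expensive but too fine a cut blows up the part count, and high-degree nodes, where connected parts provably fail and one is forced into disjoint light bundles whose costs must still be controlled — and doing so uniformly across cost scales, is the technical heart of the argument.
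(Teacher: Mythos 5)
Your reduction step (apply Lemma~\ref{lem:reduction-to-disjoint} and construct a disjoint $g$ that $O(\sqrt{n})$-approximates the MLA cost $f$), and the observation that $g\ge f$ holds automatically for any partition with $c_i=f(S_i)$, both match the paper. The decomposition and its analysis, however, take a different route, and there is a genuine gap in it.

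Your plan rests on two properties holding simultaneously: (a) a total of $k=O(\sqrt{n})$ parts, so that $\mu_S(u)\le k$ controls the contribution of nodes in $T(S)$; and (b) a locality bound $f(S_i)\le O(\sqrt{n})\,f(\{w\})$ for every $w\in S_i$, to control the overflow from nodes outside $T(S)$. These two demands cannot both be met on all MLA instances. Take the path $v_1-v_2-\cdots-v_n$ rooted at $v_1$ with $c(v_i)=2^i$. Any partition into $O(\sqrt{n})$ parts places some $v_a$ and some $v_b$ with $b\ge a+\sqrt{n}$ into a common part $S_i$, so $f(S_i)\ge f(\{v_b\})=\Theta(2^{a+\sqrt{n}})$ while $f(\{v_a\})=\Theta(2^{a})$: the locality ratio is $2^{\Omega(\sqrt{n})}$, and concretely the query $S=\{v_a\}$ already gives $g(S)/f(S)=2^{\Omega(\sqrt{n})}$. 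Conversely, the cost-ratio refinement you propose for path segments would cut after essentially every node on this instance, yielding $k=\Theta(n)$ parts and destroying the multiplicity bound. The two regimes you single out at the end of your write-up are therefore not merely hard to reconcile; as stated, the decomposition fails.

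The paper resolves this with a third, cost-sensitive cluster type that your decomposition lacks: a node $v$ is declared \emph{heavy} when both $c(R(v))$ and the cost of its still-active subtree are at most $4\sqrt{n}\,c(v)$, and the whole active subtree under $v$ is then peeled off as one part. Heavy clusters can be numerous (on the geometric path above, every node becomes its own heavy singleton cluster), so the paper's partition may have $\Omega(n)$ parts; a bound of the form $\mu_S(u)\le k$ is therefore useless and is never invoked. Instead, heavy clusters are charged directly to their roots: each heavy part costs at most $8\sqrt{n}\,c(v)$ with $v$ its root, and summing over the heavy parts touched by $S$ is at most $8\sqrt{n}\,f(V_h)\le 8\sqrt{n}\,f(S)$, since all such roots lie inside $T(S)$ (Proposition~\ref{prop:heavy_clusters_cost}). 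Only the \emph{light} clusters, which are formed precisely when no heavy node remains, are bounded in number, at most $\sqrt{n}+1$; and the absence of heavy nodes is exactly what makes the light-cluster locality bound $f(K)\le 2f(r_K)$ go through (Proposition~\ref{prop:mla_light_cluster_costs}). The paper's proof is thus a two-pronged charging argument, one prong per cluster type, rather than a single multiplicity bound. To repair your argument you would need to introduce the heavy-cluster mechanism and separate the analysis accordingly, abandoning the $\mu_S(u)\le k$ step for the heavy part of the partition.
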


To show the above result, given Lemma~\ref{lem:reduction-to-disjoint}, our goal is to find a good partition $P$ of the tree $T$'s nodes into subtrees and subforests (that we refer to as clusters). More precisely, let us use $P$ to define a disjoint service function $g$ where for each subset $V$ of nodes of $T$, $g(V) = \sum_{C \in P : C \cap V \neq \emptyset} f(C)$. 

The crucial idea is to notice that since we aim for the gap of order at most $\sqrt{n}$ between $g$ and $f$, we can see it as $g$ being assigned a budget of roughly $\sqrt{n}f(V)$ to serve $V$ for each subset $V$ of $T$'s nodes. Since the cost that $f$ incured on a set $V$ equals the cost of the minimal subtree connecting all the nodes in $V$ to the root $r$ of $T$, the value of $g(V)$ cannot exceed $\beta \sqrt{n}$ times this cost for some fixed $\beta \in \mathbb{N}$. To achieve this, we generate a partition consisting of two types of clusters. First are the subtrees rooted at ``expensive'' nodes. The intuition is that their cost alone multiplied by $\alpha \sqrt{n}$ for some $\alpha \in \mathbb{N}$ is enough to ``cover'' the cost of both their subtree and the path to $r$. The second type is the clusters that contain more than $\sqrt{n}$ nodes, since there cannot be many of them.

\paragraph*{Weighted Symmetric Subadditive JRP.} In Weighted Symmetric Subadditive JRP, the service function $f$ is a function of the total weight $w(S) = \sum_{i \in S} w_i$ of the set of types being served. In particular, $f$ is a monotone non-decreasing subadditive function with $f(S) = f(w(S))$ and $f(0) = 0$, that satisfies that for every weights $x,y$, $f(x+y)\leq f(x)+f(y)$. We refer to these functions as \textit{weighted symmetric subadditive}. 

\begin{restatable}{theorem}{nonclairvoyantcjrp}
    \label{thm:non-clairvoyant-cjrp}
    There exists an efficient deterministic $O(\sqrt{n})$-competitive algorithm for the Non-Clairvoyant Weighted Symmetric Subadditive Joint Replenishment problem. 
\end{restatable}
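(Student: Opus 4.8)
The plan is to obtain the theorem from the Reduction Lemma (Lemma~\ref{lem:reduction-to-disjoint}): it suffices to show that every weighted symmetric subadditive function $f$ over $n$ types admits a \textbf{disjoint} service function $g$ that $O(\sqrt n)$-approximates it. A disjoint $g$ is specified by a partition $\{L_1,\dots,L_k\}$ of $U$, and it automatically satisfies $g\ge f$, since $f(w(S))\le f\!\big(\sum_{j:\,L_j\cap S\ne\emptyset}w(L_j)\big)\le\sum_{j:\,L_j\cap S\ne\emptyset}f(w(L_j))=g(S)$ by monotonicity and subadditivity of $f$. So the whole task is to exhibit a partition whose \emph{stretch} $\max_{S\subseteq U}\big(\sum_{j:\,L_j\cap S\ne\emptyset}f(w(L_j))\big)/f(w(S))$ is $O(\sqrt n)$. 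Before doing so I would normalize: replace $f$ (as a function of the real argument $w(S)$) by its least concave majorant, which for a monotone subadditive function with $f(0)=0$ costs only an absolute constant factor, so we may assume $f$ is concave and nondecreasing with $f(0)=0$ (hence $f(x)/x$ is nonincreasing and $f(\lambda x)\le\lceil\lambda\rceil f(x)$ for $\lambda\ge1$, $f(\lambda x)\ge\lambda f(x)$ for $\lambda\in[0,1]$); I would also discard the zero-weight types and sort the rest so that $w_1\ge w_2\ge\dots\ge w_n>0$.

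First I would build the partition with a single greedy scan — the weighted-symmetric analogue of the MLA clustering into ``subtrees of expensive nodes'' plus ``big clusters''. Scan the types in nonincreasing weight order, maintain a current cluster $L$, and append the next type $e$ (of weight $w_e$) to $L$ as long as $f(w(L)+w_e)\le\sqrt n\cdot f(w_e)$; when this fails, close $L$ and open a fresh cluster with $e$. Thus every cluster is either a single ``heavy'' type whose $f$-value dominates that of all remaining lighter types by a factor $\sqrt n$, or a ``balanced'' cluster whose $f$-value stays within a factor $\sqrt n$ of its lightest type. Writing $\mu_j=\min_{e\in L_j}w_e$, two facts are immediate: (i) $f(w(L_j))\le\sqrt n\,f(\mu_j)$ for every cluster (apply the append condition to the last type placed in $L_j$); and (ii) if cluster $L_{j+1}$ opens with type $e$, then $w_e\le\mu_j$ and $f(w(L_j))>(\sqrt n-1)f(w_e)\ge(\sqrt n-1)f(\mu_{j+1})$, so the minima $\mu_1\ge\mu_2\ge\dots$ are nonincreasing and the clusters ``thin out'' scale by scale. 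The greedy inspects $f$ rather than merely the weights precisely so that it adapts to the shape of $f$: where $f$ saturates a single cluster already suffices, and where $f$ grows the clusters shrink accordingly.

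Next I would bound the stretch. Fix $S\subseteq U$, set $J=\{j:L_j\cap S\ne\emptyset\}$, and pick $e_j\in L_j\cap S$ for each $j\in J$; the $e_j$ are distinct, so $w(S)\ge\sum_{j\in J}w_{e_j}\ge\sum_{j\in J}\mu_j$, and by (i) we have $\sum_{j\in J}f(w(L_j))\le\sqrt n\sum_{j\in J}f(\mu_j)$. I would then split the hit clusters into weight octaves according to $\mu_j$. In an octave containing few ($\lesssim\sqrt n$) hit clusters, the crude per-cluster bound already gives $O(\sqrt n)\,f(w(S))$; in an octave containing many hit clusters, those clusters contribute to $S$ that many distinct types of comparable weight, so $f(w(S))$ is correspondingly large, and concavity of $f$ converts the ``$m$ copies of $f(\text{octave weight})$'' into $O(\sqrt n)$ times $f$ of their sum, which is at most $f(w(S))$. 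Since consecutive octave weights decay by a constant factor, the contributions of the lighter octaves telescope against the heaviest occupied one, and summing over octaves yields $\sum_{j\in J}f(w(L_j))\le O(\sqrt n)\,f(w(S))$, i.e.\ stretch $O(\sqrt n)$. Together with the Reduction Lemma this gives the claimed efficient deterministic $O(\sqrt n)$-competitive algorithm, the partition being computed in one pass.

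I expect the octave-summation step to be the main obstacle: estimate (i) already spends a full $\sqrt n$ factor on \emph{each} hit cluster, so the analysis must avoid paying a second factor equal to the number of hit clusters at a single scale. This is exactly where it matters that $f$ is a function of the total weight rather than an arbitrary monotone subadditive function — it is what lets many small clusters at one scale be charged collectively to $f(w(S))$ through the (sub)additivity/concavity of $f$, and it is what removes the extra $\sqrt{\log n}$ factor present in the general-purpose bound of Touitou~\cite{Touitou23}.
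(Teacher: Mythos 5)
Your high-level strategy --- invoke the Reduction Lemma and exhibit a disjoint service function that $O(\sqrt n)$-approximates $f$ --- is exactly the paper's, and the least-concave-majorant normalization is the same first step the paper takes (via \cite{EzraFRS20}). Where you diverge is in how the partition is built: the paper approximates $f$ by a pointwise infimum of $O(\log W)$ affine pieces, classifies types by which piece is active at their weight, splits each class into ``light'' types and power-of-two weight buckets of ``heavy'' types, and then chops each bucket into chunks of size $\lceil\sqrt n\rceil$; you instead run a single greedy scan in decreasing-weight order and close a cluster $L$ when $f(w(L)+w_e)>\sqrt n\,f(w_e)$.

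This greedy is not correct: the closing condition allows clusters to grow too large precisely when $f$ is strongly concave. Take all weights equal to $1$ and $f(x)=x^{3/4}$, a legitimate monotone subadditive function with $f(0)=0$. The condition $(|L|+1)^{3/4}\le\sqrt n$ only triggers at $|L|\approx n^{2/3}$, so the greedy produces $n^{1/3}$ clusters of size $n^{2/3}$ each, with per-cluster value $f(n^{2/3})=n^{1/2}$. A set $S$ picking one type per cluster has $f(w(S))=(n^{1/3})^{3/4}=n^{1/4}$ while $g(S)=n^{1/3}\cdot n^{1/2}=n^{5/6}$, giving stretch $n^{7/12}=\omega(\sqrt n)$. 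More generally, for $f(x)=x^\alpha$ with $\tfrac12<\alpha<1$ the stretch is $n^{1/2+(1-\alpha)(1-1/(2\alpha))}$, a polynomial factor above $\sqrt n$. The paper avoids this by explicitly capping each bucket at size $\lceil\sqrt n\rceil$ (not merely capping its $f$-value relative to its lightest member) within a fixed class and weight octave.

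The analysis sketch has a mirroring error: in your ``octave'' step you want $m\,f(x)\le O(\sqrt n)\,f(mx)$ for $m$ hit clusters at scale $x$, but for a concave $f$ with $f(0)=0$ one has $m\,f(x)\ge f(mx)$, and $m\,f(x)/f(mx)$ can be as large as $m$ (near-constant $f$), so that step is not justified by concavity --- concavity works against you here, not for you. Your observation (ii) also does not yield scale-by-scale decay: combining $f(w(L_j))\le\sqrt n\,f(\mu_j)$ with $f(w(L_j))>(\sqrt n-1)\,f(\mu_{j+1})$ gives only $f(\mu_{j+1})<\frac{\sqrt n}{\sqrt n-1}f(\mu_j)$, which is weaker than the monotonicity you already have. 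Repairing the proof essentially requires the paper's ingredients: a global bound of $O(\sqrt n)$ on the number of size-$\lceil\sqrt n\rceil$ parts, a per-part bound $w(T)\le O(\sqrt n)\,w(T\cap S)$ obtained by grouping types of weight within a factor $2$, and a separate geometric-sum argument (via Lemma~\ref{lem:crossover}) for the light and leftover parts.
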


As in the MLA case, given Lemma~\ref{lem:reduction-to-disjoint}, our goal is to devise a partitioning algorithm inducing a disjoint service function that $O(\sqrt{n})$-approximates the corresponding weighted symmetric subadditive service cost function. We first consider the special case where each weight equals $1$. In this scenario, the service function $f$ is symmetric and becomes a function of the cardinality of the set of types being served. Consequently, the partition of elements should ideally reflect this symmetry by ensuring equal-sized parts.

Determining the optimal size for each part involves striking a delicate balance. Larger sizes enable us to leverage the subadditivity of $f$ but excessively large sizes incur higher costs for smaller sets. We demonstrate that selecting sets of size $O(\sqrt{n})$ is the optimal tradeoff in worst-case scenarios. Notably, this partition remains effective across all symmetric subadditive functions simultaneously.

Extending this approach to the general case of weighted symmetric subadditive functions involves categorizing elements into weight classes based on powers of 2, ensuring approximate size equivalence, and then partitioning into sets of size $\sqrt{n}$. However, this approach risks generating an excessive number of sets. To address this issue, we devise a partitioning strategy that accommodates light-weight elements first. Then, for heavier-weight elements, we further partition by a factor of 2, provided it is feasible, to achieve a refined division.

\subsubsection{Running time of Algorithms and Reductions}

Regarding the running time of our algorithms, we stress that, in the case of Multi-Level Aggregation and Weighted Symmetric Subadditive JRP, the reductions are executed in polynomial time. However, the reduction for general subadditive functions is executed in exponential time, as we need to create a set for each subset of types.

\subsubsection{Lower bounds on approximating subadditive service functions.} 
Since Non-Clairvoyant MLA and Weighted Symmetric Subadditive JRP have a $\Omega(\sqrt{n})$ lower bound~\cite{OSD,LeUX23}, the Reduction Lemma implies that MLA and Weighted Symmetric Subadditive JRP service functions do not admit $o(\sqrt{n})$-approximation by disjoint service functions. Nevertheless, we also give direct proofs in Sections~\ref{sec:DWMLA} and~\ref{sec:weighted}, respectively.
The latter provides a simpler alternative proof for the $\Omega(\sqrt{n})$ lower bound for unweighted Universal Set Cover shown in \cite{JiaLNRS05}. We also show, in Proposition~\ref{thm:tight}, that Jia et al.'s analysis of their Universal Set Cover algorithm \cite{JiaLNRS05} is tight. Thus, we need a different approach to $o(\sqrt{n \log n})$-approximate arbitrary subadditive service functions by disjoint service functions. Finally, in Proposition~\ref{prop:tight-touitou}, we exhibit an MLA and Weighted Symmetric Subadditive JRP instance where Touitou's algorithm \cite{Touitou23} can only achieve an $\Omega(\sqrt{n \log n})$-approximation to the respective service cost functions.

\subsection{Future Directions} 

Our work leaves several tantalizing open questions. The main open problem is whether subadditive service functions admit better than $O(\sqrt{n \log n})$-approximation by disjoint service functions. This would immediately improve the competitive ratio for general non-clairvoyant Subadditive JRP. It would also be interesting to find better approximations of other interesting subclasses such as XOS and submodular functions.

\begin{toappendix}
    \subsection{Further Related Work}\label{sec:rel-work}

\paragraph{Network Design with Delay.}  Network Design with Delay is very closely related to Subadditive JRP. In Network Design with Delay, we are given a universe of $n$ request types and $m$ items with costs. Each request type $h$ has a corresponding upwards-closed collection $\mathcal{C}_h$ of subsets of items that satisfy it. At any point in time, the algorithm can transmit a set of items. A request of type $h$ is served by a transmission that contains some subset in $\mathcal{C}_h$. Some specific problems are Set Cover with Delay~\cite{CarrascoPSV18,AzarCKT20, Touitou21}, Facility Location~\cite{AzarT19,AzarT20, FL-linear} and other network design problems~\cite{AzarT19,AzarT20}. Network Design with Delay is equivalent to Subadditive JRP as the optimal cost of satisfying a subset of request types is subadditive, and Subadditive JRP can be formulated as Set Cover with Delay with exponentially many sets.

\paragraph{Online problems with delay.}
There has been a lot of work on other online problems with delay as well. In Online Service with Delay, we are given one or multiple servers on a metric space. Requests arrive on points of the metric space and are served when a server is moved to their location. In Online Matching with Delay, we are given an underlying metric space. Requests arrive on points of the metric space and are served when they are matched. Most of the work on Online Service with Delay~\cite{OSD,cachingDelay,hitting-kserver,OSD-line, nonclairvoyantKServer, Touitou23-OSD} and Online Matching with Delay~\cite{emekoriginal,polylog, bipartite, spheres, hemispheres, pd,2sources, concave, convex, DeryckereU23} has been in the clairvoyant setting. Nevertheless, non-clairvoyant algorithms have been designed for Online Service with Delay~\cite{nonclairvoyantKServer} and Online Matching with Delay~\cite{DeryckereU23}.

\paragraph{Approximating subadditive functions.} The approximation of subadditive functions has been a focal point of research, at least since the introduction of the complement-free hierarchy of functions introduced in \cite{LehmannLN06}. This consists of the class of submodular function, which is strictly contained into the XOS class, which in turn is strictly contained in the general subadditive class.\footnote{Several other classes within the submodular class have been considered (e.g. additive, unit-demand, Gross-Substitutes).} As for approximation, it is known that XOS approximates subadditive within a factor of $O(\log(n))$, which is tight \cite{Dobzinski07, BhawalkarR11}. The approximability gap between Submodular and XOS is $\Theta(\sqrt{n})$ \cite{BadanidiyuruDFKNR12, GoemansHIM09}. In a similar vein, \cite{DobzinskiFF21} prove that Gross-Substitute functions (first introduced in \cite{KelsoC82}) cannot approximate submodular set functions within a factor better than $\Omega\left(\frac{\log(n)}{\log\log(n)}\right)$.
In the context of \emph{symmetric} function approximation, \cite{EzraFRS20} show that symmetric subadditive, symmetric XOS and symmetric submodular\footnote{We use the term symmetric submodular to indicate functions that are (monotone) concave in the size of the set.} functions are all $2$-close to each other, which is tight. \end{toappendix} 
\section{Subadditive Joint Replenishment}
In this section, we prove our Reduction Lemma (Lemma~\ref{lem:reduction-to-disjoint}) and apply it to Subadditive JRP.

\subsection{Reduction Lemma}\label{sec:reduction}

We begin by showing that there is a simple deterministic 2-competitive algorithm for Disjoint TCP Acknowledgement via a straightforward extension of the classic algorithm for TCP Acknowledgement of \cite{DoolyGS01}. 

In the following, we use $\lambda$ to denote a service and $Q_\lambda$ to be the set of request types transmitted by $\lambda$. We also use $\OPT$ to mean both the optimal solution and the cost of the optimal solution.

\begin{lemma}\label{lem:disjoint-alg}
    There is a deterministic $2$-competitive algorithm for Disjoint TCP Acknowledgement.
\end{lemma}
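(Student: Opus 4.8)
The plan is to extend the classic 2-competitive algorithm of Dooly, Goldman, and Scott \cite{DoolyGS01} for TCP Acknowledgement to the disjoint setting by running an independent copy of it on each part $S_i$ of the partition. Recall that in TCP Acknowledgement, whenever a request is outstanding, we accumulate its delay; once the total accumulated (unserved) delay of currently-pending requests reaches the transmission cost, we transmit (acknowledge all pending requests) and reset. For Disjoint TCP Acknowledgement, since $f(S) = \sum_{i=1}^k c_i \cdot \ind{S_i \cap S \neq \emptyset}$ decomposes as a sum over the parts, the service cost is literally a sum of $k$ independent TCP Acknowledgement costs: serving a set $S$ costs $c_i$ for exactly the parts $S_i$ that $S$ hits. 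So the natural algorithm maintains, for each $i$, a counter of the delay accrued by pending requests whose type lies in $S_i$; when that counter hits $c_i$, it serves all pending requests of types in $S_i$ (paying $c_i$) and resets counter $i$ to $0$. Requests of types in different parts are handled completely independently.

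Next I would argue correctness of the competitive ratio by a per-part charging argument. Fix the optimal solution $\OPT$. For part $i$, let $\OPT_i$ denote the portion of $\OPT$'s cost attributable to part $S_i$: the service cost $c_i$ times the number of $\OPT$'s transmissions that touch $S_i$, plus the delay $\OPT$ incurs on requests of types in $S_i$. Since $f$ is disjoint, $\OPT = \sum_i \OPT_i$ exactly (service cost splits by part, delay cost splits by request), and likewise our algorithm's cost is $\sum_i \ALG_i$ where $\ALG_i$ is the cost of copy $i$. It therefore suffices to show $\ALG_i \le 2\,\OPT_i$ for each $i$. But copy $i$ is exactly the standard TCP Acknowledgement algorithm running on the instance restricted to request types in $S_i$ with transmission cost $c_i$, and $\OPT_i$ is at least the cost of the optimal solution to that restricted instance; so the classic 2-competitiveness of \cite{DoolyGS01} gives $\ALG_i \le 2\,\OPT_i$. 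Summing over $i$ yields $\ALG = \sum_i \ALG_i \le 2 \sum_i \OPT_i = 2\,\OPT$.

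The two things to be careful about — and the main (mild) obstacle — are: (i) checking that the non-clairvoyant version of the Dooly--Goldman--Scott algorithm is still 2-competitive, i.e., that it only ever needs to know the delay accrued so far and whether a deadline has arrived, which it does since the transmission rule is triggered purely by the running total of accrued delay; and (ii) justifying the clean decomposition $\OPT = \sum_i \OPT_i$, which relies on the key structural fact that a disjoint service function assigns each request type to a unique part, so there is no interaction between parts in either the service cost or the delay cost. Once these are in place the argument is routine, and the same reasoning covers the deadline case (a deadline in part $i$ forces an immediate transmission of part $i$, which the standard deadline version of the algorithm handles within the same factor of $2$).
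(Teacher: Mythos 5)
Your proposal is correct and takes essentially the same approach as the paper: the same algorithm (per-part accrued-delay trigger), the same decomposition $\OPT = \sum_i \OPT_i$, and the same per-part bound $\ALG_i \leq 2\,\OPT_i$. The only difference is that you invoke the $2$-competitiveness of Dooly--Goldman--Scott as a black box per part, whereas the paper re-derives it inline via the interval-charging argument.
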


\begin{proof}
    Suppose there is a partition of $H$ into subsets $S_1, \ldots, S_k$ with costs $c_1, \ldots, c_k$ and $f(S) = \sum_{i = 1}^k c_i \cdot \ind{S_i \cap S \neq \emptyset}$. Our algorithm works as follows: for each set $S_i$, transmit $S_i$ whenever the pending requests in $S_i$ have accumulated a total delay equal to $c_i$. 

    It is clear that the total service cost of the algorithm is at most its total delay cost. We now show that the latter is at most the cost of the optimal solution. To this end, let us consider the cost of the optimal solution. Suppose that the optimal solution makes a set of services $\Lambda^*$. Let $\Lambda^*_i$ denote the subset of services that transmit a request type in $S_i$. The total service cost of the optimal solution is then 
    \begin{align*}
        \sum_{\lambda \in \Lambda^*} f(Q_\lambda) 
        = \sum_{\lambda \in \Lambda^*} \sum_{i=1}^k c_i \cdot \ind{S_i \cap Q_\lambda \neq \emptyset}
        = \sum_{i=1}^k c_i \cdot |\Lambda^*_i|.
    \end{align*} 

    Define $d^\OPT_q$ and $d^\ALG_q$ to be the delay cost of $q$ in the optimal solution and algorithm's solution, respectively. 
    Let $\OPT_i = c_i \cdot |\Lambda^*_i| + \sum_{q : h_q \in S_i} d^\OPT_q$. This is the total cost that $\OPT$ incurs on requests on $S_i$. Observe that $\OPT = \sum_{i=1}^k \OPT_i$. 
    
    We now show that $\sum_{q : h_q \in S_i} d^\ALG_q \leq \OPT_i$ for each set $S_i$. Suppose that the algorithm transmits $S_i$ at times $t_1, \ldots, t_\ell$. Since every request must be served eventually, no request with type in $S_i$ arrives after $t_\ell$. Consider the intervals $[0, t_1], (t_1, t_2], \ldots (t_{\ell-1}, t_\ell)$. By construction, the delay cost of the algorithm is exactly $\ell c_i$. For each interval $I$, let $Q(I)$ denote the requests with types in $S_i$ that arrived during the interval. During $I$, the optimal solution either transmits a type in $S_i$ or incurs a delay cost of $c_i$ on the requests in $Q(I)$. Since the intervals are disjoint, $\OPT_i \geq \ell c_i$, as desired. 

    The lemma now follows from the fact that the total service cost of the algorithm is exactly its delay cost, which in turn is at most $\OPT$.\end{proof}

We are now ready to prove the Reduction Lemma which we restate here.

\reduction*

\begin{proof}
 Lemma~\ref{lem:disjoint-alg} implies that it suffices to reduce the Subadditive JRP instance to an instance of Disjoint TCP Acknowledgement losing at most a factor of $\alpha$. Let $Q$ be the set of requests of the Subadditive JRP instance and let $\OPT^f$ denote the cost of the optimal solution. Our reduction creates an instance of Disjoint TCP Acknowledgement with the same set of requests but with service cost function $g$. Let $\OPT^g$ denote the cost of the optimal solution to the instance of Disjoint TCP Acknowledgement. We now show that $\OPT^f \leq \OPT^g \leq \alpha \OPT^f$. Let $\Lambda$ be a feasible solution to $Q$, $c^f(\Lambda)$ be its cost in the Subadditive JRP instance and $c^g(\Lambda)$ be its cost in the Disjoint TCP Acknowledgement instance. The delay cost of $\Lambda$ is the same in both instances. The service cost of $\Lambda$ in the Subadditive JRP instance has cost $\sum_{\lambda \in \Lambda} f(Q_\lambda)$ and in the Disjoint TCP Acknowledgement instance, it has cost $\sum_{\lambda \in \Lambda} g(Q_\lambda)$. Since $g$ $\alpha$-approximates $f$, we get that $c^f(\Lambda) \leq c^g(\Lambda) \leq \alpha c^f(\Lambda)$. This implies that $\OPT^f \leq \OPT^g \leq \alpha \OPT^f$, as desired.
\end{proof}

 \subsection{Applying the Reduction Lemma to Subadditive JRP}
We use the Reduction Lemma proved earlier to give a simple deterministic $O(\sqrt{n \log n})$-competitive algorithm for Non-Clairvoyant Subadditive JRP. The main insight is to reduce the problem of showing that an arbitrary service function $f$ can be approximated by a disjoint service function to the Universal Set Cover problem.  

\begin{lemma}
\label{lem:usc}
    Suppose every instance of USC admits a $\alpha$-stretch assignment. Then every subadditive service function $f$ can be $\alpha$-approximated by some disjoint service function $g$.
\end{lemma}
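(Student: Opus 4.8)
The plan is to turn the service function $f$ into an instance of Universal Set Cover whose optimal set-cover costs mirror $f$, extract an $\alpha$-stretch assignment, and read off from it the partition of $U$ that defines the desired disjoint service function $g$. Concretely, I would first build the USC instance as follows: the ground set is $U$ (the $n$ request types), and for every subset $T \subseteq U$ we introduce a set $S_T := T$ with cost $c(S_T) := f(T)$. Since $f$ is monotone and $f(\emptyset)=0$, one checks easily that for any $X \subseteq U$ the optimal set-cover cost $\OPT(X)$ is exactly $f(X)$: covering $X$ by the single set $S_X$ costs $f(X)$, and by subadditivity any cover $\{S_{T_1},\dots,S_{T_m}\}$ of $X$ has cost $\sum_j f(T_j) \ge f\!\left(\bigcup_j T_j\right) \ge f(X)$ using monotonicity in the last step. (This is the step where subadditivity of $f$ is essential.)

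Next I would invoke the hypothesis to obtain an $\alpha$-stretch assignment $a$ for this instance, i.e. a map $a : U \to \Ccal$ with $c(a(X)) \le \alpha\,\OPT(X) = \alpha f(X)$ for all $X \subseteq U$. Group the elements of $U$ by the set they are assigned to: for each distinct set $S$ in the image of $a$, let $P_S := \{ e \in U : a(e) = S \}$. These nonempty parts $P_S$ form a partition of $U$; index them $S_1,\dots,S_k$ with associated costs $c_i := f(S_i)$ — wait, more carefully, the cost associated to part $P_{S}$ should be the cost $c(S) = f(T)$ of the USC set $S = S_T$ it came from, not $f(P_S)$. I would define the disjoint service function $g$ by $g(X) = \sum_{i : S_i \cap X \ne \emptyset} c_i$ where $c_i$ is the USC-cost of the set that induced part $S_i$. (A harmless cleanup: since $c(S_T) = f(T)$ and by monotonicity $f(T) \ge f(P_{S_T})$ because $P_{S_T} \subseteq T$, one could also just take $c_i = f(S_i)$ if one prefers a genuinely intrinsic disjoint function; I'd check that this only decreases $g$ and hence preserves the lower bound $g \ge f$ — actually it might break $g \ge f$, so I'd be careful and keep $c_i = c(a(e))$ for $e \in S_i$, which is cleanest.)

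Finally I would verify the two inequalities defining an $\alpha$-approximation. For the \emph{upper bound} $g(X) \le \alpha f(X)$: the parts $S_i$ meeting $X$ are precisely the distinct sets appearing as $a(e)$ for $e \in X$, so $\sum_{i : S_i \cap X \ne \emptyset} c_i = \sum_{S \in a(X)} c(S) = c(a(X)) \le \alpha f(X)$ by the stretch guarantee. For the \emph{lower bound} $f(X) \le g(X)$: pick for each part $S_i$ meeting $X$ the USC set $T_i$ with $a(e) = S_{T_i}$ (so $S_i \subseteq T_i$ and $c_i = f(T_i)$); these $T_i$ together cover $X$ since every $e \in X$ lies in $S_{a(e)} = T_i$ for its part, hence $f(X) \le f\!\left(\bigcup_i T_i\right) \le \sum_i f(T_i) = \sum_i c_i = g(X)$, again using subadditivity and monotonicity. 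That establishes $f(X) \le g(X) \le \alpha f(X)$ for all $X$, so $g$ is a disjoint $\alpha$-approximation of $f$.

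I do not anticipate a serious obstacle here; the only subtlety — and the point I would be most careful about writing up — is making sure that the costs attached to the parts of the partition are the USC set-costs $c(a(e))$ rather than $f$ evaluated on the parts themselves, so that both the stretch upper bound and the covering lower bound go through cleanly via subadditivity and monotonicity of $f$.
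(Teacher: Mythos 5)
Your proposal is correct and follows essentially the same route as the paper: reduce to a USC instance with $\Ccal = 2^U$ and costs $c(S)=f(S)$, note $\OPT(X)=f(X)$, extract an $\alpha$-stretch assignment, and form the partition from the preimages of the assignment. The one place you hedge is actually a non-issue: your worry that setting $c_i = f(S_i)$ (with $S_i$ the part, not the assigned superset) might break $g \geq f$ is unfounded, because the parts meeting $X$ always cover $X$ (they come from a partition of $U \supseteq X$), so monotonicity and subadditivity give $f(X) \leq \sum_{i : S_i \cap X \neq \emptyset} f(S_i)$ for \emph{any} partition. The paper handles this even more cleanly with a W.L.O.G.\ re-assignment of each element to its own preimage (available since $\Ccal = 2^U$ and $f$ is monotone), after which the assigned set and the part coincide and $c_i = f(S_i)$ needs no further justification.
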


\begin{proof}
    We will construct an instance of USC and use the $\alpha$-stretch assignment to construct $g$. Consider the instance of USC with universe $U = H$, $\Ccal = 2^H$, and $c(S) = f(S)$ for every $S \in \Ccal$. Note that $\OPT(S) = f(S)$ since $f$ is monotone non-decreasing and subadditive.
    
    Let $a$ be an $\alpha$-stretch assignment for this USC instance. Suppose $a(U) = \{S_1, \ldots, S_k\}$. Since $a$ maps each element to a set containing it, we have that $a^{-1}(S_i) \subseteq S_i$. Moreover, $f$ is monontone non-decreasing, so we can assume W.L.O.G. that $a^{-1}(S_i) = S_i$;\footnote{Otherwise, we can assign the elements in the preimage of $S_i$ under $a$, i.e., $a^{-1}(S_i)$, to the preimage itself.} 
thus $S_1, \ldots, S_k$ are disjoint and partition $H$. Define the disjoint service function $g$ with the partition $\{S_1, \ldots, S_k\}$ and costs $c_1, \ldots, c_k$ where $c_i = f(S_i)$. Observe that $g(S) = c(a(S)) \geq \OPT(S) = f(S)$. Since $a$ has $\alpha$-stretch, we get that for every $S$, $f(S) \leq g(S) \leq \alpha f(S)$.
\end{proof}

Jia et al.~\cite{JiaLNRS05} showed that every instance of USC has a $O(\sqrt{n \log n})$-stretch assignment. Together with the above lemma, we get the following theorem.

\begin{theorem}
    For every subadditive service function $f$, there is a disjoint service function $g$ that $O(\sqrt{n \log n})$-approximates $f$.
\end{theorem}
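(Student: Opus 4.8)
The plan is to combine the two results stated just above: Jia et al.'s theorem that every USC instance admits an $O(\sqrt{n \log n})$-stretch assignment, and Lemma~\ref{lem:usc}, which says that if every USC instance admits an $\alpha$-stretch assignment then every subadditive service function is $\alpha$-approximated by some disjoint service function. So the proof is essentially a one-line composition: set $\alpha = O(\sqrt{n \log n})$ in Lemma~\ref{lem:usc}.

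Concretely, here is the order of steps I would carry out. First, given the subadditive service function $f$ over the ground set $U = H$ with $|U| = n$, invoke Lemma~\ref{lem:usc}: it reduces the claim to exhibiting an $\alpha$-stretch assignment for the specific USC instance with universe $U$, set system $\Ccal = 2^U$, and costs $c(S) = f(S)$. Second, quote Jia et al.~\cite{JiaLNRS05}: for \emph{every} USC instance on $n$ elements there is an assignment of stretch $O(\sqrt{n \log n})$; in particular this holds for the instance just constructed. Third, feed this assignment back into the construction inside the proof of Lemma~\ref{lem:usc} to obtain the desired disjoint service function $g$ with $f(S) \le g(S) \le O(\sqrt{n\log n}) f(S)$ for all $S \subseteq U$. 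That completes the argument.

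I would also flag one subtlety worth a sentence in the writeup, even though it is already handled: Lemma~\ref{lem:usc} relies on the fact that $\OPT(S) = f(S)$ in the constructed USC instance, which uses monotonicity and subadditivity of $f$ (any cover of $S$ by sets $T_1,\dots,T_m$ has cost $\sum_j f(T_j) \ge f(\bigcup_j T_j) \ge f(S)$, and the singleton cover $\{S\}$ achieves $f(S)$). Since this is baked into the earlier lemma, here it is enough to cite it. There is no real obstacle in this particular proof — it is pure composition — so the "hard part", such as it is, is merely making sure the hypotheses of Lemma~\ref{lem:usc} are matched exactly (subadditive, monotone non-decreasing, $f(\emptyset)=0$), all of which are part of the standing definition of a service function. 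The genuine mathematical content lives entirely in \cite{JiaLNRS05} and in Lemma~\ref{lem:usc}, both already available.

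\begin{proof}
    Let $f$ be a subadditive service function over a ground set of $n$ request types. By Lemma~\ref{lem:usc}, it suffices to show that every instance of USC admits an $\alpha$-stretch assignment with $\alpha = O(\sqrt{n \log n})$. This is exactly the result of Jia et al.~\cite{JiaLNRS05}, who showed that every USC instance on $n$ elements has an $O(\sqrt{n \log n})$-stretch assignment. Applying Lemma~\ref{lem:usc} with this value of $\alpha$ yields a disjoint service function $g$ with $f(S) \le g(S) \le O(\sqrt{n \log n}) \cdot f(S)$ for every $S \subseteq H$, as claimed.
\end{proof}
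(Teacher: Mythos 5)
Your proposal is correct and matches the paper's argument exactly: the theorem is presented in the paper as an immediate corollary of Lemma~\ref{lem:usc} together with Jia et al.'s $O(\sqrt{n\log n})$-stretch result for Universal Set Cover, which is precisely the one-line composition you carry out. The subtlety you flag about $\OPT(S) = f(S)$ is indeed already handled inside Lemma~\ref{lem:usc}, so nothing further is needed here.
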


Combining this with the Reduction Lemma yields the desired theorem.

\begin{theorem}
    \label{thm:subadd-general}
    There is a deterministic $O(\sqrt{n \log n})$-competitive algorithm for Non-Clairvoyant Subadditive JRP. 
\end{theorem}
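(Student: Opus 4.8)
The plan is to chain the three results established above, with essentially no further work. First I would invoke Jia et al.~\cite{JiaLNRS05}: every instance of Universal Set Cover on a universe of $n$ elements admits an $O(\sqrt{n\log n})$-stretch assignment, computable deterministically. Setting $\alpha = O(\sqrt{n\log n})$, Lemma~\ref{lem:usc} then yields that every subadditive service function $f$ over $n$ request types is $\alpha$-approximated by some disjoint service function $g$; this is exactly the preceding theorem. Finally I would apply the Reduction Lemma (Lemma~\ref{lem:reduction-to-disjoint}) to $f$ and such a $g$: it produces a non-clairvoyant algorithm that is $2\alpha = O(\sqrt{n\log n})$-competitive for every Subadditive JRP instance whose service cost function is $f$. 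Taking $f$ to be the service function of the given instance completes the proof.

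The only points that require care are bookkeeping. (i) In the USC instance built inside the proof of Lemma~\ref{lem:usc}, the universe is $U = H$, which has exactly $n$ elements, so the Jia et al. guarantee is genuinely $O(\sqrt{n\log n})$ in the parameter $n$ of the problem; although the set system $\Ccal = 2^H$ is exponentially large, their stretch bound is phrased purely in terms of the universe size. (ii) The identity $\OPT(S) = f(S)$ used to convert ``stretch'' into ``$\alpha$-approximation'' relies on $f$ being monotone non-decreasing and subadditive, which is part of the problem definition. (iii) The competitive ratio is obtained by routing through the deterministic $2$-competitive Disjoint TCP Acknowledgement algorithm of Lemma~\ref{lem:disjoint-alg}; since both the USC assignment and the reduction are deterministic, the final algorithm is deterministic, as claimed.

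I do not expect a genuine obstacle: all of the conceptual content already lives in the Reduction Lemma and in the cited USC bound, and this theorem is merely their composition. The one honest caveat — which the paper flags separately — is that constructing the USC instance requires enumerating all subsets of $H$, so the algorithm obtained this way runs in exponential time; the statement only asserts a deterministic competitive ratio, not polynomial running time, so this is consistent, and an efficient algorithm would instead require an oracle-efficient USC routine, which is precisely why the structured cases of MLA and Weighted Symmetric Subadditive JRP are treated on their own.
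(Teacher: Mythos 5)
Your proposal is correct and follows exactly the paper's route: invoke the $O(\sqrt{n\log n})$-stretch USC bound of Jia et al., feed it through Lemma~\ref{lem:usc} to get a disjoint $O(\sqrt{n\log n})$-approximation of any subadditive $f$, and finish with the Reduction Lemma. The bookkeeping points you flag (universe size $n$, $\OPT(S)=f(S)$ via monotonicity and subadditivity, determinism, and the exponential running time caveat) are all consistent with the paper's treatment.
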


\section{Multi-Level Aggregation}
\label{sec:DWMLA}

In this section, we consider the Multi-Level Aggregation (MLA) problem. 
Let $T = (U, E)$ be a rooted tree defined over the universe $U$ of $n$ request types and let $c: U \mapsto \mathbb{R}_{\geq 0}$ be a cost function assigning weights to the nodes. 
We recall that $c$ determines the service function $f: 2^{U} \mapsto \mathbb{R}_{\geq 0}$ for this problem as $f$ assigns to each subset of nodes $V \subseteq U$ the cost of the minimal subtree that connects all the nodes in $V$ to the root $r$. 
Here, we prove that for every MLA service function $f$, there exists a disjoint service function $g: 2^{U} \mapsto \mathbb{R}_{\geq 0}$ that $O(\sqrt{n})$-approximates $f$. 
In other words, we show that for every MLA instance $(T, c)$, there exists a partition $P_1, \ldots, P_k$ of nodes of $T$ for some $k$ (which defines $g(X)= \sum_{i\in [k]} f(P_i) \cdot \ind{P_i \cap X \neq \emptyset}$ for all $X\subseteq U$), such that for all $V \subseteq U$, it holds that $g(V)/f(V) \leq O(\sqrt{n})$. 
Moreover, one can find such a partition in polynomial time.

\subsection{Notation and Algorithm Overview}

Throughout this section, we assume tree $T$ is the current MLA instance that we work with and thus is known from the context. 
In what follows, we refer to the \emph{maximal subtree} of $T$ rooted at node $v$ and to the \emph{path} connecting $v$ to the root $r$ by simply writing $T(v)$ and $R(v)$, respectively. 
Moreover, to denote these objects with node $v$ excluded, we use $\excendp{T}(v)$ and $\excendp{R}(v)$.
Finally, we let $C(v)$ be the set of $v$'s \emph{children} in $T$.

\begin{figure}[ht]
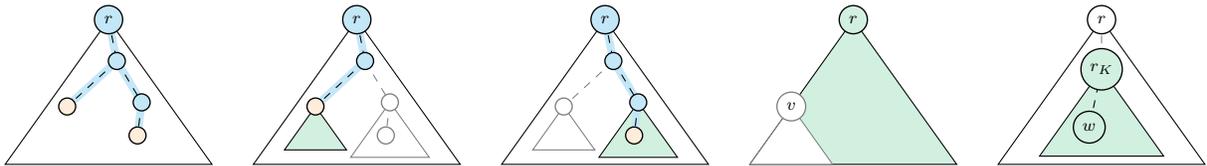

\begin{graph}
    \def\vertdist {6}

    \coordinate (R1) at (0,0);
    \coordinate (V1) at ($(R1)-(-0.2,1)$);
    \coordinate (V2) at ($(V1)-(1.2,1.1)$);
    \coordinate (V3) at ($(V1)-(-0.6,1)$);
    \coordinate (V4) at ($(V3)-(0.1,0.8)$);

    \drawtriangle{R1}{1}{black}{white}

    \drawsnakeline{R1}{V1}{black}{niebieski_mazak}
    \drawsnakeline{V1}{V2}{black}{niebieski_mazak}
    \drawsnakeline{V1}{V3}{black}{niebieski_mazak}
    \drawsnakeline{V3}{V4}{black}{niebieski_mazak}
    
    \drawnode{N1}{R1}{r}{black}{niebieski_mazak}
    \drawnode{N2}{V1}{}{black}{niebieski_mazak}
    \drawnode{N3}{V2}{}{black}{pomaranczowy_mazak}
    \drawnode{N4}{V3}{}{black}{niebieski_mazak}
    \drawnode{N5}{V4}{}{black}{pomaranczowy_mazak}

\coordinate (R2) at (\vertdist,0);
    \coordinate (V1) at ($(R2)-(-0.2,1)$);
    \coordinate (V2) at ($(V1)-(1.2,1.1)$);
    \coordinate (V3) at ($(V1)-(-0.6,1)$);
    \coordinate (V4) at ($(V3)-(0.1,0.8)$);

    \drawtriangle{R2}{1}{black}{white}
    \drawtriangle{V2}{0.3}{black}{zielony_mazak}
    \drawtriangle{V3}{0.38}{gray}{white}

    \drawsnakeline{R2}{V1}{black}{niebieski_mazak}
    \drawsnakeline{V1}{V2}{black}{niebieski_mazak}
    \drawsnakeline{V1}{V3}{gray}{white}
    \drawsnakeline{V3}{V4}{gray}{white}
    
    \drawnode{N1}{R2}{r}{black}{niebieski_mazak}
    \drawnode{N2}{V1}{}{black}{niebieski_mazak}
    \drawnode{N3}{V2}{}{black}{pomaranczowy_mazak}
    \drawnode{N4}{V3}{}{gray}{white}
    \drawnode{N5}{V4}{}{gray}{white}

\coordinate (R3) at (2*\vertdist,0);
    \coordinate (V1) at ($(R3)-(-0.2,1)$);
    \coordinate (V2) at ($(V1)-(1.2,1.1)$);
    \coordinate (V3) at ($(V1)-(-0.6,1)$);
    \coordinate (V4) at ($(V3)-(0.1,0.8)$);

    \drawtriangle{R3}{1}{black}{white}
    \drawtriangle{V2}{0.3}{gray}{white}
    \drawtriangle{V3}{0.38}{black}{zielony_mazak}

    \drawsnakeline{R3}{V1}{black}{niebieski_mazak}
    \drawsnakeline{V1}{V2}{gray}{white}
    \drawsnakeline{V1}{V3}{black}{niebieski_mazak}
    \drawsnakeline{V3}{V4}{black}{niebieski_mazak}
    
    \drawnode{N1}{R3}{r}{black}{niebieski_mazak}
    \drawnode{N2}{V1}{}{black}{niebieski_mazak}
    \drawnode{N3}{V2}{}{gray}{white}
    \drawnode{N4}{V3}{}{black}{niebieski_mazak}
    \drawnode{N5}{V4}{}{black}{pomaranczowy_mazak}
    
\coordinate (R) at (3*\vertdist,0);
    \coordinate (V) at ($(R)+(-1.5,-2.1)$);

    \drawtriangle{R}{1}{black}{zielony_mazak}
    \drawtriangle{V}{0.4}{gray}{white}
    
    \drawnode{N1}{R}{r}{black}{zielony_mazak}
    \drawnode{N2}{V}{v}{gray}{white}

\coordinate (R) at (4*\vertdist,0);
    \coordinate (V) at ($(R)+(0,-1.2)$);
    \coordinate (V2) at ($(V)-(0.3,1.4)$);

    \drawtriangle{R}{1}{black}{white}
    \drawtriangle{V}{0.6}{black}{zielony_mazak}

    \drawsnakeline{R}{V}{gray}{white}
    \path[draw=black, dashed] (V) -- (V2);
    
    \drawnode{N1}{R}{r}{black}{white}
    \drawnodesubs{N2}{V}{r}{K}{black}{zielony_mazak}
    \drawnode{N3}{V2}{w}{black}{zielony_mazak}
\end{graph} \caption{In the first three figures we show the costs of serving the orange nodes. The first figure corresponds to the cost of $f$ on these nodes. The following two figures show the cost of $g$ on these nodes, assuming that they belong to different clusters $A$ and $B$. Fourth figure shows the set of active nodes in the tree (colored in green) after $T(v)$ gets clustered. Fifth figure presents the setting in Proposition \ref{prop:mla_light_cluster_costs}.}
\label{fig:figure_one}
\end{figure}

First, we present the idea behind our approach. 
Recall that our goal is to find a partition $P_1, \ldots, P_k$ of nodes of $T$ such that the gap between the values $f(V)$ and $g(V)$ is at most of order $\sqrt{n}$ for all the sets $V \subseteq U$. 
Here, $f(V)$ is the cost of minimal subtree $T_V$ of $T$ that connects all the nodes in $V$ to the root, as stated before. 
On the other hand, $g(V)$ needs to cover the costs of all the parts in $P$ that intersect with $V$. 
For instance, if $V$ intersects exactly two parts $A$ and $B$ in $P$, then $g(V) = f(A) + f(B)$. 
Although these parts themselves are disjoint by the definition of partition $P$, as we pay for each of them separately in $g$ (by paying for set $A$, we mean generating the cost of $f(A)$), we incur not only the costs of their nodes $c(A)$ and $c(B)$ but also the costs of the paths that connect them to the root $r$ (see Figure \ref{fig:figure_one}).

Note that this process can cause us to incur two types of additional costs with respect to the optimal value $f(V)$. 
First, both parts $A$ and $B$ may contain not only the nodes in $V$ but also their neighbors, for which we need to pay as well. 
Second, as we pay for each part separately, we may be forced to pay for some nodes on the paths to the root multiple times (see Figure \ref{fig:figure_one}). 

Since $f(V)$ is equal to the cost of the nodes in $T_V$ and we aim for $g$ to be $\sqrt{n}$-approximation of $f$, the intuition is that $g$ can afford to pay the cost of each node in $T_V$ roughly $\sqrt{n}$ times (as this gives the desired ratio). 
This observation provides the foundations for our algorithm. 
Let us remark that at the beginning, all the nodes in $T$ are unpartitioned, i.e., $P = \emptyset$. 
Our algorithm revolves around two procedures. 
The first one can be seen as assigning each node $v$ in $T$ a budget of $\alpha \sqrt{n} \cdot c(v)$ for some $\alpha \in \mathbb{N}$. 
A vertex $v$ may then use such a budget to create a new part $K$ in the partition. 
We allow $v$ to generate only one form of a \emph{cluster}, i.e., a part to be included in $P$, that consists of all the unpartitioned nodes in its subtree $T(v)$. 
Furthermore, for such a part $K$ to be added to $P$, it needs to hold that the costs of (the unpartitioned nodes in) $T(v)$ and $R(v)$ both fit into $v$'s budget. 
If we manage to add $K$ to $P$, we call both node $v$ and cluster $K$ \emph{heavy}.

Whenever the first procedure cannot be applied, i.e., there are no vertices that can generate heavy clusters from the unpartitioned nodes, we run the second procedure. 
The idea then is to find a subtree (or a family of subtrees) of size roughly $\sqrt{n}$ (details to be presented later) and group them together into a new part in the partition. 
We call this part a \emph{light cluster}. 
In case there are nodes that become heavy after this action (as their descendants got clustered), we go back to the first procedure, which starts a new iteration of the main algorithm. 

Notice that the idea behind the second procedure is to upper bound the number of times we need to pay the cost of the paths connecting the clusters to the root $r$. 
Since $T$ has $n$ nodes and each light cluster is of size close to $\sqrt{n}$, we can only create roughly $\sqrt{n}$ such clusters. 
Thus, even when $V$ intersects all the light clusters, we pay for the nodes in $T_V$ at most $O(\sqrt{n})$ times, which we can afford. 
It remains to estimate the cluster costs, which follow in the next section.

\subsection{MLA Partitioning Algorithm}

\paragraph*{Heavy clusters.} Let us first present two definitions. Here, we assume that whenever we are given a partially created partition $\tilde{P}$ of nodes of an MLA tree $T$, then the set of nodes it already partitioned, i.e., $V(\tilde{P}) = \bigcup \tilde{P}$, does not disconnect $T$, i.e., tree $T' = T \setminus \bigcup \tilde{P}$ is a subtree of $T$.

\begin{definition} \label{def:active_nodes}
    Let $T$ be a tree given in an MLA instance, denote the set of its nodes by $U$, and let $\tilde{P}$ be a partially created partition of $U$ (i.e., $V(\tilde{P}) = \bigcup \tilde{P}$ is a proper subset of $U$). 
    Then we call all the nodes that are not partitioned yet, i.e., belong to $U \setminus V(\tilde{P})$, \emph{active} (see Figure \ref{fig:figure_one}). 
    We use the notation of $V|_{act}$ to restrict any subset $V$ of nodes in $T$ to the nodes that are active.
\end{definition}

\begin{definition} \label{def:heavy_node_and_cluster}
    Let $(T, c)$ be an MLA instance, and let $\tilde{P}$ be a partially created partition of $T$'s nodes. 
    We say that an active node $v$ is \emph{heavy} if the costs of path $R(v)$ and subtree $\act{T}(v)$ are at most $4\sqrt{n} \cdot c(v)$ each. 
    If we extend $\tilde{P}$ by adding $\act{T}(v)$, we call this new part a \emph{heavy} cluster.
\end{definition}

Now, we can prove a simple fact about heavy clusters.

\begin{propositionapprep} \label{prop:heavy_clusters_cost}
    Let $(T, c)$ be an MLA instance. Take any partition $P$ of nodes of $T$ and let $P_{h,1}$, $\ldots$, $P_{h,s}$ be a sublist of all heavy clusters in $P$. We denote their roots by $v_{h,1}$, $\ldots$, $v_{h,s}$, respectively, and the set containing them by $V_h$. Then, it holds that $\sum_{i=1}^s f(P_{h,i}) \leq 8\sqrt{n} \cdot f(V_h)$.
\end{propositionapprep}
\begin{proof}
    By Definition \ref{def:heavy_node_and_cluster}, we have that for each node $v_{h,i}$ the following is satisfied: $c(P_{h,i}) \leq 4\sqrt{n} \cdot c(v_{h,i})$ and $c(R(v_{h,i})) \leq 4\sqrt{n} \cdot c(v_{h,i})$. The first inequality here comes from the fact that cluster $P_{h,i}$ was the set of all active nodes (Definition \ref{def:active_nodes}) contained in the subtree $T(v_{h,i})$ at the moment it was created (i.e., it was $\act{T}(v_{h,i}))$. Hence, it holds that 
    \begin{align} \label{ineq:heavy_cluster}
    \begin{split}
        f(P_{h,i}) & = c(P_{h,i}) + c(\excendp{R}(v)) \leq c(P_{h,i}) + c(R(v)) \\[2pt]
        & \leq 4\sqrt{n} \cdot c(v_{h,i}) + 4\sqrt{n} \cdot c(v_{h,i}) = 8\sqrt{n} \cdot c(v_{h,i}),
    \end{split}
    \end{align}
    where the first equality comes from the fact that $P_{h,i}$ is a subtree, which means that the minimal tree containing all its nodes and the root $r$ is only missing the path from $v$ to $r$ (with $v$ excluded as we already counted it in the cluster). 
    Moreover, let us notice that $f(V_h) \geq \sum_{i=1}^s c(v_{h,i})$, as it is the cost of the minimal tree containing all the nodes $v_{h,i}$. 
    Thus, to obtain the desired inequality, we only need to sum \eqref{ineq:heavy_cluster} over all the heavy clusters and then apply the inequality above.
\end{proof}

\paragraph*{Light clusters.} Here, we present a procedure that generates a light cluster. 

\begin{definition} \label{def:light_cluster}
    Let $(T, c)$ be an MLA instance, and let $\tilde{P}$ be a partially created partition of $T$'s nodes. 
    We say that a subset $K$ of nodes of $\act{T}$ is a \emph{light} cluster if (1) its size fits into the range $I(n) := [\sqrt{n}, 2\sqrt{n}]$, (2) it is either a maximal subtree in $\act{T}$ or a collection of maximal subtrees having the same parent, and (3) $\act{T}$ does not contain any heavy nodes.\footnote{This third condition is for analysis purposes only and the property giving the name to \emph{light} clusters.} 
    In case $\act{T}$ is of size smaller than $\sqrt{n}$, and we set $K = \act{T}$, we drop the first condition and still call $K$ a light cluster.
\end{definition}

\noindent
Given the definition above, we present Algorithm \ref{alg:mla_light_cluster} that shows how to find such a cluster.

\begin{algorithm}[htbp]
\caption{MLA Light Cluster Search}
\label{alg:mla_light_cluster}

\textbf{Input}: MLA tree $T$ with some nodes marked active ($\act{T}$ is a subtree of $T$ containing its root $r$) \\
\textbf{Output}: light cluster formed of the nodes in $\act{T}$

\begin{algorithmic}[1]

\STATE \textbf{if} $|\act{T}| \leq 2\sqrt{n}$ \label{line:mla_cluster_whole_tree}

\STATE \ \ \ \textbf{return} $\act{T}$   \label{line:mla_subtree_with_r}

\STATE $u := r$   \label{line:mla_start_at_the_root}

\STATE \textbf{while} there exist a node $v \in C_{act}(u)$ such that $|T_{act}(v)| > 2\sqrt{n}$   \label{line:mla_size_child_two_sqrt}

\STATE \ \ \ $u := v$   \label{line:mla_set_child_as_the_current_node}

\STATE \textbf{if} there exist a node $v \in C_{act}(u)$ such that $|T_{act}(v)| \geq \sqrt{n}$ \textbf{then}   \label{line:mla_size_child_sqrt}

\STATE \ \ \ \textbf{return} $T_{act}(v)$   \label{line:mla_child_node_subtree}

\STATE \textbf{else}   \label{line:mla_needs_to_cluster_forest}

\STATE \ \ \ denote all the elements in $C_{act}(u)$ by $v_1, v_2, \ldots, v_j$ for some $j$ \label{line:mla_sort_children_by_decreasing_weights}

\STATE \ \ \ set iterator $i = 1$ and initialize a new cluster $V$ with an empty set

\STATE \ \ \ \textbf{while} $|V| < \sqrt{n}$   \label{line:mla_forest_loop}

\STATE \ \ \ \ \ \ add $T_{act}(v_i)$ to $V$

\STATE \ \ \ \ \ \ increment $i$ by 1

\STATE \ \ \ \textbf{return} $V$   \label{line:mla_child_nodes_forest}

\end{algorithmic}
\end{algorithm}

\begin{propositionapprep}
    If there are no heavy nodes in $\act{T}$ (see condition (3) in Definition \ref{def:light_cluster}), then Algorithm \ref{alg:mla_light_cluster} finds a light cluster in $\act{T}$. 
\end{propositionapprep}
\begin{proof}
    Notice that we start the search of a new cluster by checking whether the size of $\act{T}$ (the subtree containing all the active nodes in $T$) is smaller or equal to $2\sqrt{n}$ (line \ref{line:mla_cluster_whole_tree}). 
    If so, we return the whole tree $\act{T}$ since it fits into the description given in the last sentence of Definition \ref{def:light_cluster}. 
    Otherwise, we set $r$ to be the current node we are at, which we denote by $u$ (line \ref{line:mla_start_at_the_root}). 
    Then, we go through the while loop from line \ref{line:mla_size_child_two_sqrt} to \ref{line:mla_set_child_as_the_current_node}, each time picking a child $v$ of the current node $u$ such that the subtree $\act{T}(v)$ is of size greater than $2\sqrt{n}$. 
    If such a node exists, we move to it, setting $u = v$, and we leave the while loop otherwise.
    
    In the second case, we know that, as we go to line \ref{line:mla_size_child_sqrt}, two conditions are satisfied. 
    First, the size of the subtree $\act{T}(u)$ rooted at the current node $u$ is at least $2\sqrt{n}$. 
    Indeed, we either stayed at the root node, not satisfying the condition in the if statement in line \ref{line:mla_cluster_whole_tree}, or we further went from $r$ through a sequence of its descendants, each having a subtree of size greater than $2\sqrt{n}$. 
    Second, none of $u$'s children has a subtree of size greater than $2\sqrt{n}$, as we already left the while loop. 
    
    Now, in line \ref{line:mla_size_child_sqrt}, we check whether there exists a child $v$ of the current node, which subtree $\act{T}(v)$ is of size at least $\sqrt{n}$. 
    If so, we return $\act{T}(v)$, as it satisfies the conditions to be a light cluster. 
    Otherwise (line \ref{line:mla_needs_to_cluster_forest}), we iterate through $u$'s children $v_i$ (line \ref{line:mla_forest_loop}) and add the nodes contained in their subtrees $\act{T}(v_i)$ to a set $V$. 
    We stop at the moment when the size of $V$ becomes at least $\sqrt{n}$ and return $V$ as a new cluster. 
    It is easy to notice that in the while loop, we indeed need to pass the $\sqrt{n}$ size threshold, as $|\act{T}(u)| > 2\sqrt{n}$.
    Moreover, we know that before we added the nodes of the last subtree $T'$ to $V$, $V$ had a size smaller than $\sqrt{n}$. 
    Since $|T'| < \sqrt{n}$, we have that the whole group is of size smaller than $2\sqrt{n}$.
\end{proof}

\paragraph*{Main algorithm.} 

Before we describe the partitioning algorithm, let us introduce a helper function. 
We define method $\texttt{cluster}(V)$ to group all the elements of $V$ together and include them as a new part in the partition. 
Let us also emphasize that after this call, all the elements in $V$ become inactive. 
With the above notation, we can formalize our approach as presented in Algorithm \ref{alg:mla_partition}.

\begin{algorithm}[htbp]
\caption{\label{alg:mla_partition} MLA Partitioning Algorithm}

\textbf{Input}: MLA instance $(T,c)$ \\
\textbf{Output}: partition $P$ of the nodes of $T$

\begin{algorithmic}[1]

\STATE initialize an empty partition $P$

\STATE \textbf{while} $\act{T}$ is not empty

\STATE \ \ \ \textbf{while} there exist a heavy node $v \in \act{T}$   \label{line:mla_heavy_node_exists_check}

\STATE \ \ \ \ \ \ $\texttt{cluster}(\act{T}(v))$   \label{line:mla_heavy_node_subtree}

\STATE \ \ \ \textbf{if} $\act{T}$ is empty   \label{line:mla_empty_tree_after_heavy_cluster_check}

\STATE \ \ \ \ \ \ \textbf{break}

\STATE \ \ \ apply Algorithm \ref{alg:mla_light_cluster} to find a light cluster $V$ in $\act{T}$   \label{line:mla_light_cluster_search}

\STATE \ \ \ $\texttt{cluster}(V)$   \label{line:mla_light_cluster_creation}

\STATE \textbf{return} $P$

\end{algorithmic}
\end{algorithm}

As mentioned in the first part of this section, the main partitioning algorithm runs heavy and light cluster searches in a loop. 
In the first step, it iteratively finds the heavy clusters in the tree $\act{T}$ determined by the already created partition (lines \ref{line:mla_heavy_node_exists_check} to \ref{line:mla_heavy_node_subtree}). 
Then, if tree $T$ is not yet partitioned (condition in line \ref{line:mla_empty_tree_after_heavy_cluster_check} does not hold),  it goes to the second step that finds one light cluster and adds it to the partition (lines \ref{line:mla_light_cluster_search} to \ref{line:mla_light_cluster_creation}). 
After this point, it goes to the initial step and loops.

Let us emphasize that during the whole partitioning procedure, the set $\act{T}$ of all active elements in $T$ forms a proper subtree containing the root $r$ of $T$. 
Indeed, in the beginning, $T_{act} = T$ and all the \texttt{cluster} calls truncate one or more maximal subtrees from $T_{act}$. 
Now, given Algorithm \ref{alg:mla_partition}, we go back to proving the properties of light clusters. 

\begin{propositionapprep}
\label{prop:mla_light_clusters_numer}
    Let $T$ be an MLA tree rooted at some node $r$ and let $P$ be the partition of nodes of $T$ created by Algorithm \ref{alg:mla_partition}. 
    We denote all the light clusters in $P$ by $P_{\ell,1}, P_{\ell,2}, \ldots, P_{\ell,t}$ and require them to be listed in the creation order. 
    Then, it holds that there are at most $\sqrt{n} + 1$ parts $P_{\ell,i}$.
\end{propositionapprep}
\begin{proof}
    Notice that by the definition, the only light cluster that can have a size smaller than $\sqrt{n}$ is the one containing the root $r$. 
    Thus, all the light clusters created before, i.e., at least $t - 1$ of them, have the size at least $\sqrt{n}$.
    Since there are $n$ nodes in tree $T$, we get that there are at most $n/\sqrt{n} = \sqrt{n}$ such clusters. Thus, $t \leq \sqrt{n} + 1$, which concludes the proof.
\end{proof}

In the remaining part of this section, we refer to the clusters created in lines \ref{line:mla_subtree_with_r}, \ref{line:mla_child_node_subtree} of Algorithm \ref{alg:mla_partition}, i.e., the ones that consist of a single subtree, as the light clusters of type I. 
We call the light clusters consisting of forests (created in line \ref{line:mla_child_nodes_forest}) the light clusters of type II. 
We prove that the cost function $c$ satisfies the following properties. Here, we overuse the notation of $c$ and extend it to the subsets as well, i.e., for any $V \subseteq U$ we set $c(V) = \sum_{v \in V} c(v)$.

\begin{propositionapprep}
\label{prop:mla_light_cluster_costs}
    Let $(T, c)$ be an MLA instance and let $P$ be the partition obtained on it by Algorithm \ref{alg:mla_partition}. Take any light cluster $K$ in $P$ and denote by $r_K$ the root of $K$ if it is a cluster of type I. Otherwise, if $K$ is a cluster of type II, we use $r_K$ to denote the parent node of the forest contained in $K$. Then it holds that $c(P(r_K)) \geq c(K)$.
\end{propositionapprep}
\begin{proof}
    Without loss of generality, assume that $K$ is of type I. 
    Let $w$ be the node in $K$ that has the highest cost. 
    By Definition \ref{def:light_cluster}, we know that $|K| \leq 2\sqrt{n}$. 
    Hence, by an averaging argument, we have $c(w) \geq c(K) / (2\sqrt{n})$, which implies $2\sqrt{n} \cdot c(w) \geq c(K)$. 
    Now, assume by contradiction that $c(P(r(K))) < c(K)$. 
    Then, if we split the path from $w$ to $r$ into two parts by cutting it on the node $r_K$, we got $c(P(w)) = c(P(w) \cap K) + c(\excendp{P}(r_K)$. 
    Since $c(P(w) \cap K) \leq c(K)$ and $c(P(r_K) \leq c(K)$ by our assumption, we get that $c(P(w)) \leq 2c(K) \leq 2 \cdot 2\sqrt{n} \cdot c(w) = 4\sqrt{n} \cdot c(w)$. 
    However, this means that $w$ is a heavy node, which contradicts the initial assumption. 
    Thus, it holds that $c(P(r_K)) \geq c(K)$. 
    The proof for type II follows the same steps.
\end{proof}

\begin{corollaryapprep}
    Let us subsume the notation and the conditions of Proposition \ref{prop:mla_light_cluster_costs}. Then, it holds that $f(K) \leq 2f(r_K)$.
\end{corollaryapprep}
\begin{proof}
    Notice that for type I cluster $K$, $f(K)$ consists of the cost of $K$ and the cost of the path connecting it to the root $r$ of $T$ (to be precise, excluding $r_K$ from this path, as we already count its cost in the cluster). Thus, the following holds
    \[
        f(K) = c(K) + c(\excendp{P}(r_K)) \leq c(P(r_K)) + c(\excendp{P}(r_K)) \leq 2 c(P(r_K)) = 2 f(r_K),
    \]
    where the first inequality is implied by Proposition \ref{prop:mla_light_cluster_costs}, the second one from the fact that we added the cost of $r_K$ to the right side, and the last inequality is by the definition of $f$.
\end{proof}
    
Given the above, we can prove the main theorem of this section.

\begin{theorem}\label{thm:mla}
    For any MLA service function $f$, there exists a disjoint service function $g$ that $O(\sqrt{n})$-approximates $f$.
    It can be found in time polynomial w.r.t.\ the MLA instance defining $f$.
\end{theorem}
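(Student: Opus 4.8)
The plan is to prove Theorem~\ref{thm:mla} by exhibiting the partition $P$ produced by Algorithm~\ref{alg:mla_partition}, defining $g(V) = \sum_{i : P_i \cap V \neq \emptyset} f(P_i)$, and showing $f(V) \le g(V) \le O(\sqrt{n}) f(V)$ for every $V \subseteq U$. The lower bound $f(V) \le g(V)$ is immediate from subadditivity of $f$ together with the fact that $V$ is covered by the parts it intersects: $f(V) \le \sum_{i : P_i \cap V \neq \emptyset} f(P_i \cap V) \le \sum_{i : P_i \cap V \neq \emptyset} f(P_i) = g(V)$, using monotonicity in the last step. So all the work is in the upper bound.

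For the upper bound, fix $V \subseteq U$ and let $T_V$ be the minimal subtree connecting $V$ to $r$, so $f(V) = c(T_V)$. The parts of $P$ that $V$ intersects split into heavy clusters and light clusters. First I would handle the heavy clusters: if $P_{h,1}, \ldots, P_{h,s}$ are the heavy clusters meeting $V$, with roots $v_{h,1}, \ldots, v_{h,s}$, then each $v_{h,i}$ lies in $T(\,\cdot\,)$ intersecting $V$, hence — since a heavy cluster $\act{T}(v_{h,i})$ is a subtree rooted at $v_{h,i}$ and it meets $V$ — the root $v_{h,i}$ itself is connected to $r$ through $T_V$, i.e.\ $v_{h,i} \in T_V$. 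Wait, more carefully: if $P_{h,i}$ meets $V$ then some node of $P_{h,i}$ is in $V \subseteq T_V$; since $P_{h,i} = \act{T}(v_{h,i})$ is a subtree with root $v_{h,i}$ and $T_V$ is connected to $r$, the path from that node up to $r$ passes through $v_{h,i}$, so $v_{h,i} \in T_V$. Letting $V_h = \{v_{h,i} : i \in [s]\} \subseteq T_V$, Proposition~\ref{prop:heavy_clusters_cost} gives $\sum_i f(P_{h,i}) \le 8\sqrt{n}\, f(V_h) \le 8\sqrt{n}\, c(T_V) = 8\sqrt{n}\, f(V)$. This controls the heavy part.

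Next I would handle the light clusters meeting $V$. Here I would use Proposition~\ref{prop:mla_light_clusters_numer}, which says there are at most $\sqrt{n}+1$ light clusters in total, together with the Corollary stating $f(K) \le 2 f(r_K)$ for each light cluster $K$, where $r_K$ is its root (type I) or the parent of its forest (type II). The key point is that if a light cluster $K$ meets $V$, then $r_K \in T_V$: for type I, $K$ is a subtree rooted at $r_K$, so any node of $K$ in $V$ has its root-path through $r_K$; for type II, $K$ is a forest of subtrees all hanging off $r_K$, so again the root-path of any node of $K$ passes through $r_K$. Hence $f(r_K) = c(R(r_K)) \le c(T_V) = f(V)$ (since $R(r_K)$ is the root-path of a node in $T_V$, it is contained in $T_V$). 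Therefore each light cluster meeting $V$ contributes $f(K) \le 2 f(r_K) \le 2 f(V)$, and since there are at most $\sqrt{n}+1$ of them, the total light contribution is at most $2(\sqrt{n}+1) f(V)$. Summing the two bounds, $g(V) \le 8\sqrt{n} f(V) + 2(\sqrt{n}+1) f(V) = O(\sqrt{n}) f(V)$.

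Finally I would note the running-time claim: Algorithm~\ref{alg:mla_partition} performs $O(n)$ cluster creations (each light cluster removes $\ge \sqrt{n}$ nodes or empties the tree, so there are $O(\sqrt{n})$ of them, and each outer iteration creates at most $n$ heavy clusters but total heavy clusters are $O(n)$ since each removes a nonempty subtree), and each heavy-node test and each light-cluster search is polynomial in $n$ by inspection of Algorithms~\ref{alg:mla_light_cluster} and~\ref{alg:mla_partition}. I expect the main obstacle to be the bookkeeping in the light-cluster argument — specifically, making the claim "$K$ meets $V$ implies $r_K \in T_V$" airtight for type II clusters (forests), and verifying that the Corollary's bound $f(K)\le 2f(r_K)$ combined with the count $\sqrt{n}+1$ genuinely covers all light clusters including the final possibly-undersized one containing $r$; but since that last cluster still satisfies the Corollary (its root is $r$, and $f(r) = c(r) \le f(V)$ as $r \in T_V$ always), this goes through.
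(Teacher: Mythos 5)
Your proof is correct and follows essentially the same route as the paper: you use the partition from Algorithm~\ref{alg:mla_partition}, split the intersecting parts into heavy and light clusters, bound the heavy contribution via Proposition~\ref{prop:heavy_clusters_cost} after observing each heavy root lies in $T_V$, bound each light contribution by $2f(V)$ via the corollary and the count of $\sqrt{n}+1$ from Proposition~\ref{prop:mla_light_clusters_numer}, and conclude $g(V) \le (10\sqrt{n}+2)f(V)$. The one thing you do that the paper leaves implicit is explicitly verifying the lower bound $f(V) \le g(V)$ (required by the definition of $\alpha$-approximation), which is a welcome bit of extra care.
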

\begin{proof}
    Let $(T,c)$ be the MLA instance that defines $f$, and let $U$ be the set of nodes in $T$.
    The idea is to prove that the partition $P = \{P_1, P_2, \ldots, P_k\}$ generated on $T$ by Algorithm \ref{alg:mla_partition} induces a set function $g(V)=\sum_{i\in [k]} f(P_i) \ind{P_i \cap V \neq \emptyset}$ on all subsets $V \subseteq U$ that is an $O(\sqrt{n})$-approximation to $f$.
    The function $g$ is a disjoint service function by design.
    
    For this purpose, we need to show that $\max_{V \subseteq U} g(V) / f(V)$ is of order at most $\sqrt{n}$. 
    Let us note that in our case, $f(V)$ is just the cost of the minimal subtree connecting $V$ to the root. Thus, for any subset $V'$ of $V$ it holds that $f(V') \leq f(V)$.

    Let $V \subseteq U$ be any subset of nodes and let $P_{h,1}$, $\ldots$, $P_{h,s}$ and $P_{\ell,1}, P_{\ell,2}, \ldots, P_{\ell,t}$ be the lists of all the heavy and light clusters that intersect $V$, respectively. 
    We also denote the roots of the heavy clusters by $v_{h,1}$, $\ldots$, $v_{h,s}$, respectively, and the set containing them by $V_h$. 
    Similarly, we use the convention from Proposition \ref{prop:mla_light_cluster_costs} to define light cluster nodes. For $P_{\ell,i}$, we denote its root by $r_{\ell,i}$.
    
    By Proposition \ref{prop:heavy_clusters_cost}, it holds that $\sum_{i=1}^s f(P_{h,i}) \leq 8\sqrt{n} \cdot f(V_h)$. 
    Moreover, since $V$ intersects all these heavy clusters, it either contains their roots or some nodes that are their descendants. 
    Thus, the minimal subtree connecting $V$ to the root $r$ contains the minimal subtree connecting $V_h$ to the root $r$. 
    Hence,
    \begin{equation} \label{eq:estimation_on_heavy}
        \sum_{i=1}^s f(P_{h,i}) \leq 8\sqrt{n} \cdot f(V_h) \leq 8\sqrt{n} \cdot f(V).
    \end{equation}

    Now, for each light cluster $P_{\ell,i}$, we notice that since $V$ intersects it, the minimal tree connecting $V$ to $r$ contains the path from $r_{\ell,i}$ to $r$. 
    Thus, $f(V) \geq f(r_{\ell,i})$ and by Proposition \ref{prop:heavy_clusters_cost}, we get that
    \begin{equation} \label{eq:estimation_on_light}
        f(P_{\ell,i}) \leq 2f(r_K) \leq 2f(V)
    \end{equation}
    for each $\ell \in [t]$. Note that $g(V) = \sum_{K \in P: V \cap K \not= \emptyset} f(K)$. 
    Combining inequalities \ref{eq:estimation_on_heavy} and \ref{eq:estimation_on_light}, we obtain that 
    \begin{equation*}
    \begin{split}
       \frac{g(V)}{f(V)}  = \frac{\sum_{K \in P: V \cap K \not= \emptyset} f(K)}{f(V)} & = \frac{\sum_{i=1}^s f(P_{h,i}) + \sum_{i=1}^t f(P_{\ell,i})}{f(V)} \leq \frac{8\sqrt{n} \cdot f(V) + \sum_{i=1}^t 2f(V)}{f(V)} \\
        & \leq \frac{8\sqrt{n} \cdot f(V) + 2(\sqrt{n} + 1) \cdot f(V)}{f(V)} = 10\sqrt{n} + 2,
    \end{split}
    \end{equation*}
    with the last inequality implied by Proposition \ref{prop:mla_light_clusters_numer}.
    This concludes the proof that $g$ is an $O(\sqrt{n})$-approximation to $f$.

    Finally, it is easy to notice that the algorithm runs in polynomial time. We can define a dynamic structure over the tree $T$ that, for each node $v$, stores its subtree and path costs ($c(T(v))$, $c(P(v))$, together with the size $|T(v)|$ of its subtree. Updates on such a structure take at most polynomial time in $n$ (as we create a cluster, we go from the cluster root to the root of $T$, updating the data on all the nodes on the path, which is of length at most $n$). With such a structure, checking whether a node is heavy or going through a path from $r$ in search of a light cluster also takes at most linear time in $n$.
\end{proof}

Thus, by Lemma~\ref{lem:reduction-to-disjoint}, we get Theorem~\ref{thm:non-clairvoyant-mla}. The result of Theorem~\ref{thm:mla} is tight:

\begin{restatable}{proposition}
{proptightmla}
    \label{prop:tightmla}
    There exists a decreasing MLA instance $T,c$ with $n$ nodes, such that for every partition $P_1,\ldots,P_k$ of $T$ for some $k$, there exists a non-empty set $S\subseteq T$ such that $$\frac{\sum_{i\in [k]} f(P_i) \ind{S\cap P_i\neq \emptyset }}{f(S)}= \Omega(\sqrt{n}) .$$
\end{restatable}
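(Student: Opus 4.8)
The plan is to exhibit a specific "decreasing" MLA instance — a caterpillar-like or balanced tree whose node costs shrink geometrically with depth — and argue that any partition into clusters is forced to pay a $\sqrt{n}$ overhead on some test set $S$. Concretely, I would take a path (or shallow tree) $r = v_0, v_1, \ldots, v_{m}$ of length $m = \Theta(\sqrt n)$ with costs $c(v_j)$ decreasing geometrically, and hang off each $v_j$ a bundle of $\Theta(\sqrt n)$ cheap leaves (cost much smaller, or zero). The key tension the instance should capture: $f$ of a single deep leaf costs essentially the whole path to the root (because costs decrease, the dominant cost is near the root), while $g$ must account separately for whichever cluster that leaf lands in.

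First I would set up the instance precisely and compute $f$ on the relevant sets: $f$ of any single leaf hanging off $v_j$ is $\sum_{i\le j} c(v_i) = \Theta(c(v_0))$ (by the geometric decrease, the sum is dominated by $c(v_0)$), and more generally $f(S)$ for any $S$ touching at least one leaf equals $\Theta(c(v_0))$ plus the cost of the minimal subtree, which is at most $c(v_0) \cdot m$ when $S$ spans everything. Then I would do a case analysis on the partition $P$. If some cluster $P_i$ is "large" — contains leaves hanging off many different path-nodes, or contains a path-node $v_j$ with small $j$ together with many leaves — then $f(P_i)$ is large (it includes the expensive top of the path), and picking $S$ to be a single leaf inside $P_i$ gives $f(S) = \Theta(c(v_0))$ while $g(S) \ge f(P_i)$; I'd arrange the geometry so this ratio is $\Omega(\sqrt n)$. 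Otherwise every cluster is "small," and since there are $n$ leaves in bundles and each bundle has $\Theta(\sqrt n)$ leaves, a partition into small clusters must use $\Omega(\sqrt n)$ clusters; choosing $S$ to be one leaf from each of $\Omega(\sqrt n)$ distinct clusters (one per bundle, say) makes $g(S) = \sum f(P_i) \ge \Omega(\sqrt n) \cdot c(v_0)$ while $f(S)$ — being the minimal subtree connecting those leaves to $r$ — is only $\Theta(c(v_0) \cdot m) = \Theta(c(v_0)\sqrt n)$, which is not enough to absorb it unless $m$ is large; so I would instead take $S$ to be leaves all hanging off the \emph{same} shallow node, forcing $f(S) = \Theta(c(v_0))$ while $S$ still meets $\Omega(\sqrt n)$ clusters because that one bundle is spread across many small clusters. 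Balancing the bundle size, the path length, and the geometric ratio so that both cases yield $\Omega(\sqrt n)$ is the crux of the construction.

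The main obstacle I anticipate is making the two cases genuinely exhaustive and quantitatively tight simultaneously: a clever partition could try to be "medium" everywhere — clusters that are subtrees rooted at path-nodes, exactly the heavy/light structure Algorithm~\ref{alg:mla_partition} produces — and I need to show even that optimal-looking partition fails. The resolution is that with geometrically decreasing costs, a cluster that is a subtree rooted at $v_j$ has $f$-cost $\Theta(c(v_0))$ regardless of $j$ (the path to the root dominates), so if the path has length $m$ there are at least $m$ such subtree-clusters needed to cover all path-nodes, each costing $\Theta(c(v_0))$; taking $S$ = the set of all path-nodes gives $f(S) = \Theta(c(v_0) \cdot m)$ but $g(S) = \Theta(c(v_0)) \cdot (\text{number of clusters meeting } S)$, and if $m = \Theta(\sqrt n)$ and the partition is forced to use $\gg \sqrt n$... — here I must be careful, so the cleaner route is to fix $m$ slightly super-constant or to pair the path-node test with the leaf-bundle test and argue no single partition can be good for all of $\{S_{\text{one leaf per bundle}}, S_{\text{one bundle}}, S_{\text{all path nodes}}\}$ at once. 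I would finalize by choosing parameters (e.g. $\Theta(\sqrt n)$ bundles each of $\Theta(\sqrt n)$ leaves, costs halving down a short path) and verifying the adversarial $S$ explicitly in each branch of the case split, then concluding $g(S)/f(S) = \Omega(\sqrt n)$, which matches the lower bound claimed for Non-Clairvoyant MLA and certifies Theorem~\ref{thm:mla} is tight.
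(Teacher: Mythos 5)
Your construction is much more elaborate than necessary, and the case analysis you sketch has a genuine gap that you yourself flag but do not resolve. The paper's instance is a single star: root $r$ with $c(r)=\sqrt{n}$ and $n-1$ leaves $v_1,\ldots,v_{n-1}$ with $c(v_i)=1$. The proof then splits on the number of parts $k$: if $k>\sqrt{n}$, take $S$ hitting each part once, so $f(S)\le \sqrt{n}+k\le 2k$ while $g(S)\ge k\sqrt{n}$ (every nonempty part contains $r$ on its path and so has $f$-value $\ge\sqrt{n}$); if $k\le\sqrt{n}$, take $S$ hitting only the parts of size $<\sqrt{n}/2$, so $f(S)\le 2\sqrt{n}$ while the parts $S$ meets jointly contain $\ge n - k\sqrt{n}/2\ge n/2$ leaves and hence $g(S)\ge n/2$. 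That is the whole argument—no path, no geometric cost decay, no bundles.

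Your caterpillar construction creates the very balancing problem you worry about. In the regime $k\le\sqrt{n}$ your proposed test set, ``$S$ equals one bundle of $\Theta(\sqrt{n})$ leaves hanging off a shallow node,'' simply fails if that entire bundle sits inside a single cluster $P_i$: then $g(S)=f(P_i)$, which is $\Theta(c(v_0))$ because the geometric decay makes the whole path to $r$ cost $\Theta(c(v_0))$, and $f(S)$ is also $\Theta(c(v_0))$, giving a ratio of $\Theta(1)$. You acknowledge this with ``here I must be careful'' and ``the cleaner route is to fix $m$ slightly super-constant or to pair the tests,'' but you never pin down parameters under which a single $S$ works for every partition in that case, nor do you show that the three candidate test sets are jointly exhaustive. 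The star avoids all of this: with only one internal node, the small-$k$ case is handled by the simple volume argument above (small clusters must hold $\ge n/2$ leaves), and the large-$k$ case is immediate because every cluster's $f$-cost is at least $c(r)=\sqrt{n}$. I would drop the path and the geometric decay entirely and work with the star directly.
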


\begin{proof}
    Consider the tree $T$ with a root $r$ and $n-1$ children of $r$ denoted by $v_1,\ldots,v_{n-1}$.
    The cost $c$ is such that $c(r)=\sqrt{n}$, while for all $i\in [n-1]$, $c(v_i)=1$.
    Now, consider any partition $P_1,\ldots,P_k$.
    If $k>\sqrt{n}$, then consider a set $S$ that intersects each $P_i$ exactly once. Thus, $f(S) \leq \sqrt{n} + k \leq 2k$, while $\sum_{i\in [k]} f(P_i) \ind{S\cap P_i\neq \emptyset } \geq k \cdot \sqrt{n}$, which proves this case.
    Else ($k \leq \sqrt{n}$), consider a set $S$ that intersects $P_i$ once if and only if $|P_i| < \sqrt{n}/2 $ (otherwise does not intersect at all).
    It holds that $f(S) \leq 2\sqrt{n}$ because one has $\sqrt{n}$ from $r$ and $\sqrt{n}$ intersections with $P_i$'s in the worst case,
while  $$\sum_{i\in [k]} f(P_i) \ind{S\cap P_i\neq \emptyset } \geq n-\sum_{i\in [k]} |P_i|\cdot \ind{S\cap P_i =  \emptyset } \geq n- k \sqrt{n}/2 \geq n/2, $$
    which concludes the proof.
\end{proof} 
\section{Weighted Symmetric Subadditive Joint Replenishment} \label{sec:weighted}
In this section, we study Weighted Symmetric Subadditive JRP. We have a set $U$ of $n$ request types with weights $w(\{j\}) = w_j$ for each $j \in U$. Let $f$ be the set function over $U$: In this setting, we have that the service cost of a set $S$ only depends on the total weight of the elements belonging to $S$, as opposed to the identity of those elements. Formally, $f(S) = f(w(S))$, where function $f$ is now intended as a monotone non-decreasing subadditive function of weights of a set with $f(0) = 0$, and for every two weights $x,y$, it holds that $f(x+y)\leq f(x)+f(y)$. For brevity, we call these functions \textit{weighted symmetric subadditive}. Our goal is to show that for every weighted symmetric subadditive service function $f$ on $U$, there exists a partition of $U$ into sets $S_1,\ldots,S_k$ for some $k$, such that the disjoint service function $g: U \rightarrow \R_{\geq 0}$ defined by this partition where $g(S) = \sum_{i=1}^k f(S_i)\cdot\indicator{S\cap S_i \neq \emptyset}$ satisfies $g(S) \leq O(\sqrt{n})f(S)$  for every $S \subseteq U$.

We begin, in Section~\ref{sec:cardinality}, by analyzing a special case of unweighted symmetric subadditive service costs. Namely, where the weight of each element is $1$, and thus, $w(S) = |S|$: These functions are simply referred to as \textit{symmetric subadditive}. We achieve a tight $\Theta(\sqrt{n})$-stretch with a simple partitioning algorithm (partition into $\sqrt{n}$ sets of size $\sqrt{n}$ each), and this serves as a warm-up to the weighted symmetric subadditive case presented in Section~\ref{sec:weighted-concave}, where we also achieve a tight $\Theta(\sqrt{n})$-stretch. 

\subsection{Symmetric Subadditive JRP}
\label{sec:cardinality}
We first consider symmetric subadditive service functions. Observe that these functions are symmetric (i.e., $f(S) = f(S')$ for all sets $S,S' \subseteq U$ such that $|S| = |S'|$). For convenience, for a cardinality $0\leq s\leq |U|$, we use $f(s)$ as the value of sets of size $s$. 
We show that for symmetric subadditive $f$, one can construct  a disjoint service function $g$ that $O(\sqrt{n})$-approximates $f$. We then show that $O(\sqrt{n})$ is tight even in the special case of $f$ being a symmetric unweighted set cover function.
This provides an alternative, simpler proof for the lower bound on USC of \cite{JiaLNRS05}. We first state the following simple but useful observation.

\begin{restatable}{observation}{obssubadd}
    \label{obs:subadd}
    For all symmetric subadditive functions $f: \R_{+} \rightarrow \R_{+}$, and all $y \geq x > 0$, it holds that $f(y)/f(x) \leq \lceil y/x \rceil$. 
\end{restatable}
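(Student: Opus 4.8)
The plan is to prove $f(y)/f(x) \leq \lceil y/x \rceil$ by a straightforward subadditivity argument. Set $m = \lceil y/x \rceil$, so $m$ is the smallest positive integer with $mx \geq y$. First I would use monotonicity of $f$: since $y \leq mx$, we have $f(y) \leq f(mx)$. Then I would apply subadditivity repeatedly to bound $f(mx)$ from above. The hypothesis gives $f(a+b) \leq f(a) + f(b)$ for all nonnegative weights $a, b$; a trivial induction on $m$ then yields $f(mx) \leq m \cdot f(x)$.

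Putting these together gives $f(y) \leq f(mx) \leq m f(x) = \lceil y/x \rceil f(x)$, and dividing by $f(x) > 0$ (which is positive because $x > 0$ and $f$ is monotone non-decreasing with, implicitly, $f$ positive on positive inputs — or at worst the inequality is vacuous if $f(x) = 0$) gives the claim. I should be slightly careful here: if $f(x) = 0$ then $f(y) \le \lceil y/x\rceil f(x) = 0$ forces $f(y) = 0$ as well, and the stated ratio inequality is interpreted appropriately (or one restricts to the regime where $f(x) > 0$, which is the only interesting case); I would note this in passing rather than belabor it.

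The only mild subtlety — hardly an obstacle — is making the induction explicit: the base case $m = 1$ is immediate, and for the inductive step I write $f((k+1)x) = f(kx + x) \leq f(kx) + f(x) \leq k f(x) + f(x) = (k+1) f(x)$, using the induction hypothesis and the subadditivity assumption on the pair of weights $kx$ and $x$. There is genuinely no hard part here; the observation is a one-line consequence of the defining properties of symmetric subadditive functions, and the proof is essentially three inequalities chained together.
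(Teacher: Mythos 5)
Your proof is correct and is essentially identical to the paper's: both use monotonicity to pass from $f(y)$ to $f(\lceil y/x\rceil\, x)$ and then iterate subadditivity to peel off $f(x)$ one factor at a time. The paper writes the chained inequalities in one line rather than phrasing it as an explicit induction, but the content is the same.
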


\begin{proof}
    Let $k = \lceil y/x \rceil$. Then,
    $f(y) \leq f(kx) 
 \leq  f(x) + f((k-1)x) \leq f(x) +\ldots +f(x) = k \cdot f(x)$.
 The first inequality is by monotonicity, and the second and third by subadditivity.   
\end{proof}

\begin{restatable}{lemma}{lemsym}
    \label{lem:sym}
       For every symmetric subadditive service function $f$, there exists a disjoint service function $g$ that $O(\sqrt{n})$-approximates it. 
\end{restatable}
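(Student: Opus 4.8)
The plan is to exhibit an explicit disjoint service function $g$ and bound its stretch against $f$. Since $f$ is symmetric and subadditive (so $f(s)$ depends only on the cardinality $s$), the natural choice is to partition the ground set $U$ into roughly $\sqrt{n}$ parts, each of size roughly $\sqrt{n}$. Concretely, I would set $m = \lceil \sqrt{n} \rceil$ and split $U$ into $k = \lceil n/m \rceil$ disjoint blocks $S_1, \ldots, S_k$, each of size at most $m$ (and at least $m$ for all but possibly the last one, so $k \le \sqrt{n}+1$). Define $g$ as the disjoint service function induced by this partition with costs $c_i = f(|S_i|) \le f(m)$. By construction $g(S) = \sum_{i : S \cap S_i \neq \emptyset} f(|S_i|) \ge f(|S|) = f(S)$ because the parts hit by $S$ together cover $S$ and $f$ is monotone subadditive; so the lower bound $f(S) \le g(S)$ is immediate.

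For the upper bound $g(S) \le O(\sqrt{n}) f(S)$, fix any nonempty $S \subseteq U$ and let $t$ be the number of parts it intersects, so $g(S) \le t \cdot f(m)$. I would split into two regimes. If $|S| \ge m$, then $t \le k \le \sqrt{n}+1$, and $f(m) \le f(|S|) = f(S)$ by monotonicity, giving $g(S) \le (\sqrt{n}+1) f(S)$. If $|S| < m$, then $S$ can intersect at most $|S|$ parts, so $t \le |S|$, and by Observation~\ref{obs:subadd} (with $y = m$, $x = |S|$) we have $f(m)/f(|S|) \le \lceil m/|S| \rceil \le m/|S| + 1 \le 2m/|S|$ (using $|S| \le m$); hence $g(S) \le t \cdot f(m) \le |S| \cdot (2m/|S|) f(|S|) = 2m\, f(S) = O(\sqrt{n}) f(S)$. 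Combining the two cases yields the claimed $O(\sqrt{n})$-approximation, and the partition is trivially computable in polynomial time.

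The main thing to get right is the second regime: one must use subadditivity in the form of Observation~\ref{obs:subadd} to trade off the number of intersected parts against the per-part cost, since naively $g(S)$ could be as large as $|S| \cdot f(m)$ which is much bigger than $f(|S|)$ unless we exploit that $f(m) \lesssim (m/|S|) f(|S|)$. I expect no serious obstacle beyond bookkeeping with ceilings; the essential tradeoff — parts of size $\Theta(\sqrt{n})$ balance "too many small parts intersected" against "paying too much for singletons" — is exactly the one flagged in the introduction, and the proof just makes it quantitative.
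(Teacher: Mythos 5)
Your proposal is correct and follows essentially the same route as the paper: partition $U$ into $\Theta(\sqrt{n})$ blocks of size $\Theta(\sqrt{n})$, bound each intersected block's cost by $f(\lceil\sqrt{n}\rceil)$ via monotonicity, and invoke Observation~\ref{obs:subadd} to trade off the number of intersected blocks against the per-block cost. The only cosmetic difference is that the paper carries out the bound in a single chain parameterized by $t = \sum_i \ind{S \cap X_i \neq \emptyset}$ (observing $f(S) \geq f(t)$ and then $t\cdot\lceil\lceil\sqrt{n}\rceil/t\rceil \leq 2\lceil\sqrt{n}\rceil$), whereas you split into the cases $|S|\geq m$ and $|S| < m$; these are the same estimate, just packaged differently.
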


\begin{proof}
    Let us consider an arbitrary symmetric subadditive service function $f$ on request types $U$. Let $g$ be the disjoint service function that induces an arbitrary partition of the elements of $U$ into sets $\{X_1, \ldots, X_k\}$, where $k = \lceil\sqrt{n}\rceil$, each of cardinality $|X_i| \leq \lceil \sqrt{n}\rceil$ (such a partition always exists). We now bound the following fraction for every $S \subseteq U$:
    \begin{align*}
        \frac{\sum_{i \in [k]} f(X_i) \cdot \ind{S \cap X_i \neq \emptyset}}{f(S)} &\leq \frac{\sum_{i \in [k]} f(\lceil \sqrt{n} \rceil ) \cdot \ind{S \cap X_i \neq \emptyset}}{f\left(\sum_{i \in [k]} \ind{S \cap X_i \neq \emptyset}\right)} \\
        & \leq \left(\sum_{i \in [k]} \ind{S \cap X_i \neq \emptyset}\right) \cdot \left\lceil\frac{\lceil \sqrt{n} \rceil }{\sum_{i \in [k]} \ind{S \cap X_i \neq \emptyset}}\right\rceil \\
        &\leq 2\lceil \sqrt{n} \rceil.
    \end{align*}
The first inequality is because $|X_i| \leq \lceil \sqrt{n}\rceil$ and from the fact that, since $X_i$'s are disjoint, the size of $S$ is at least the number of non-empty intersections with sets $X_i$'s. The second inequality follows from Observation~\ref{obs:subadd}, and the third inequality follows since $\lceil\frac{a}{b}\rceil \leq 2 \cdot \frac{a}{b}$, for every $\frac{a}{b} \geq \frac{1}{2}$, and the denominator $\sum_{i \in [k]} \ind{S \cap X_i \neq \emptyset} \leq \sqrt{n} + 1$. \end{proof}

Thus, by Lemma~\ref{lem:reduction-to-disjoint} and Lemma~\ref{lem:sym}, the following holds:
\begin{theorem}
    There exists a deterministic $O(\sqrt{n})$-competitive algorithm for the Non-Clairvoyant Symmetric Subadditive Joint Replenishment problem. 
\end{theorem}

\noindent We complement the above result by giving a tight instance:

\begin{restatable}{theorem}{thmlb}
    \label{thm:lb}
There exists a symmetric subadditive service function such that every disjoint service function is an $\Omega(\sqrt{n})$-approximation of it.
\end{restatable}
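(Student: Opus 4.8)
The plan is to exhibit a single symmetric subadditive function $f$ together with a matching family of ``bad'' sets $S$, so that no partition-induced disjoint function can simultaneously do well on all of them. The natural candidate for $f$ is a symmetric (unweighted) set cover function: fix a ground set of ``resources'' and associate to each of the $n$ request types a subset of resources; then $f(S)$ is the minimum number of resources needed to ``hit'' all types in $S$. Equivalently, as the excerpt hints, one wants $f$ to grow like $\sqrt{n}$ on the full set $U$ but like a constant (say $1$) on many carefully chosen subsets, in such a way that the subsets are spread out enough that any partition of $U$ into parts must ``shatter'' one of these subsets into $\Omega(\sqrt{n})$ distinct parts.

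First I would make the target function concrete: take $f(s) = \min\{s, \lceil\sqrt{n}\rceil\}$, which is clearly monotone, satisfies $f(0)=0$, and is subadditive since $\min\{x+y,m\} \le \min\{x,m\}+\min\{y,m\}$; this is symmetric by construction. Now let $g$ be any disjoint service function coming from a partition $X_1,\dots,X_k$ of $U$ with costs $c_i = f(|X_i|) = \min\{|X_i|,\lceil\sqrt n\rceil\}$. I want to find one set $S$ with $g(S)/f(S) = \Omega(\sqrt n)$. I would split into two cases exactly as in Proposition~\ref{prop:tightmla}. If $k \ge \sqrt n$, pick $S$ containing exactly one element from each part, so $f(S) = \min\{k,\lceil\sqrt n\rceil\} = \lceil\sqrt n\rceil$ while $g(S) = \sum_i c_i \ge \sum_i 1 = k \ge \sqrt n$ — but this only gives ratio $\Omega(1)$, not $\Omega(\sqrt n)$, so I actually need $g(S)$ to be much larger, i.e.\ I should instead take $S$ to hit only the parts of size $\ge$ some threshold; then each contributes a large $c_i$. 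The cleaner route: if many parts are ``small'' (size $< \sqrt n / 2$), there must be $\ge \sqrt n$ of them to cover $n$ elements... this does not immediately work either, so the real content is choosing the threshold so that hitting those parts costs $\Omega(\sqrt n)$ each while $f(S)$ stays $O(1)$ — which forces $f$ to be nearly flat, i.e.\ $f \equiv \lceil \sqrt n\rceil$ off of the empty set. So I would take instead $f(s) = \lceil\sqrt n\rceil$ for all $s \ge 1$ (and $f(0)=0$): still subadditive and monotone. Then for \emph{any} partition and \emph{any} nonempty $S$, $f(S) = \lceil\sqrt n\rceil$, while choosing $S$ to meet every part gives $g(S) = \sum_{i=1}^k c_i = k\lceil\sqrt n\rceil$; if $k \ge \lceil\sqrt n\rceil$ this already yields ratio $\ge \lceil\sqrt n\rceil = \Omega(\sqrt n)$. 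If instead $k < \lceil\sqrt n\rceil$, then some part $X_j$ has size $\ge n/k > \sqrt n$, so take $S = \{$one element of $X_j\}$: $f(S) = \lceil\sqrt n\rceil$ but $c_j = f(|X_j|) = \lceil\sqrt n\rceil$, giving only ratio $1$ — so this degenerate $f$ is too flat. The honest conclusion is that one needs the \emph{intermediate} shape $f(s)=\min\{s,\lceil\sqrt n\rceil\}$ and the two-case argument above, and the case $k<\sqrt n$ is handled by hitting only the small parts.

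Concretely, I would argue: given the partition $X_1,\dots,X_k$ with $f(s) = \min\{s,\lceil\sqrt n\rceil\}$, if $k \ge \lceil\sqrt n\rceil$ let $S$ pick one element from each of $\lceil\sqrt n\rceil$ parts, so $f(S) = \lceil \sqrt n\rceil$ and $g(S) = \sum$ over those parts of $\min\{|X_i|,\lceil\sqrt n\rceil\} \ge \lceil\sqrt n\rceil$ — ratio $1$, still not enough. This shows the \emph{bare} partition lower bound in this form only gives $\Omega(1)$; the genuine $\Omega(\sqrt n)$ must come from a \emph{set cover} function where $f(S)$ can be as small as $1$ on large sets. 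So the right construction, and the one I would actually carry out, is: let the resources be $m = \lceil\sqrt n\rceil$ special elements $r_1,\dots,r_m$, and build $\Theta(\sqrt n)$ ``clusters'' of request types, cluster $j$ consisting of all types assignable cheaply to $r_j$; the full function is such that $f(U) = \Theta(\sqrt n)$ but each cluster has $f = 1$. Then any partition either has $\ge \sqrt n$ parts, in which case a set meeting one representative per part has $f = O(\sqrt n)$ but $g = \Omega(\sqrt n)$ times that...

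The main obstacle, which I flag explicitly, is that the two competing requirements — $f$ nearly flat so that $f(S)$ is small on large sets, versus $f(|X_i|)$ large for the parts $S$ hits — cannot both be met by a \emph{purely symmetric} function on an abstract ground set; the resolution in the literature (\cite{JiaLNRS05}) is to use a symmetric set-cover instance where the \emph{structure} of which elements lie together matters even though $f$ itself only sees cardinalities of the optimal cover. Thus the real work is designing a symmetric set cover instance (e.g.\ a projective-plane or random incidence structure on $\Theta(\sqrt n)$ points and $\Theta(n)$ types) for which (i) the cost function depends only on cover size, hence is symmetric subadditive, and (ii) every partition of the $n$ types into parts admits a set $S$ of types whose optimal cover is small but which meets many parts each of large cost — and I would prove (ii) by the same case split on $k \lessgtr \sqrt n$ as in Proposition~\ref{prop:tightmla}, choosing in the first case a transversal of the parts and in the second case a set concentrated in one oversized part that the instance is designed so as to remain expensive to cover. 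I expect verifying (ii) — specifically the second case, ensuring a large part is still cheap for $f$ to cover while the induced $g$-cost of a single-element hit is forced to be $\Omega(\sqrt n)$ — to require the careful combinatorial design of the incidence structure and to be the crux of the argument.
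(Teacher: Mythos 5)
You correctly identify the right shape of construction (a symmetric function that caps out around $\sqrt{n}$, attacked by a transversal $S$ that meets every part once), but you abandon that line too early and arrive at a conclusion that is in fact false. You assert that the two requirements ``cannot both be met by a purely symmetric function on an abstract ground set'' and that one must build a structured (e.g.\ projective-plane) set-cover instance. This is wrong, and the paper's proof is a direct counterexample to your claim: it uses the purely symmetric function $f(S)=\lceil |S|/\sqrt{n}\rceil$. Given any partition $X_1,\ldots,X_k$, take $X$ meeting every part exactly once, so $f(X)=\lceil k/\sqrt{n}\rceil$, while $\sum_i f(X_i)=\sum_i\lceil|X_i|/\sqrt{n}\rceil\ge\max\{k,\sqrt{n}\}$ (each term is $\ge 1$ and $\sum_i|X_i|=n$). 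If $k\le\sqrt{n}$ the ratio is $\sqrt{n}$; otherwise $\lceil k/\sqrt{n}\rceil\le 2k/\sqrt{n}$ so the ratio is $\ge\sqrt{n}/2$. No case split on ``hitting only small parts versus large parts'' and no combinatorial design is needed.

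Your candidate $f(s)=\min\{s,\lceil\sqrt{n}\rceil\}$ actually works too, but not with your case split; the error in your analysis is that in the case $k\ge\sqrt{n}$ you only transversed $\lceil\sqrt{n}\rceil$ of the parts, which lets an adversarial partition (e.g.\ many singleton parts) make $g(S)$ small. The correct argument for this $f$ is to split on whether \emph{some part has size $\ge\sqrt{n}$}: if so, a single element of that part gives ratio $\sqrt{n}/1$; if not, a transversal of \emph{all} parts gives $f(S)=\sqrt{n}$ while $g(S)=\sum_i\min\{|X_i|,\sqrt{n}\}=\sum_i|X_i|=n$, again ratio $\sqrt{n}$. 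The point where your proposal genuinely fails is the inference from ``my two attempted case splits do not close'' to ``the symmetric approach is impossible''; in fact both the paper's $\lceil s/\sqrt n\rceil$ and your $\min\{s,\sqrt n\}$ succeed once the transversal and the case split are chosen correctly, and the whole incidence-structure machinery you flag as the crux is unnecessary.
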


\begin{proof}
    Let $U$ be the set of request types. For simplicity of the proof, we assume that $n=|U|$ has an integer square root. Let us consider the service function $f(S) = \clfrac{|S|}{\sqrt{n}}$, which is symmetric and subadditive, let $g$ be any disjoint service function, let $\cS$ be the collection of disjoint sets $X_i$'s that $g$ generates, and let $k$ be the number of parts in the partition $\cS$. 

    Consider some $X \subseteq U$ that intersects each $X_i$ exactly once.  We now analyze the cost of this induced partition on $X$:
    \[
        \frac{\sum_{i=1}^k f(X_i)}{f(X)} =  \frac{\sum_{i=1}^k \clfrac{|X_i|}{\sqrt{n}}}{\clfrac{k}{\sqrt{n}}} \geq \frac{\max\{k,\sqrt{n}\}}{\clfrac{k}{\sqrt{n}}} ,
    \]
    where the inequality holds since the it is a sum of $k$ terms where each is at least $1$, and since the sets $X_1,\ldots,X_k$ cover $U$, thus $\sum_{i=1}^k |X_i|  = n$.
    
    Now, if $k \leq \sqrt{n}$ then $$ \frac{\max\{k,\sqrt{n}\}}{\clfrac{k}{\sqrt{n}}} = \sqrt{n}.$$
    Otherwise, $\frac{k}{\sqrt{n}} >1$, thus $\clfrac{k}{\sqrt{n}} \leq 2 \frac{k}{\sqrt{n}}$, which implies that $$ \frac{\max\{k,\sqrt{n}\}}{\clfrac{k}{\sqrt{n}}} \geq  \frac{k}{2k/\sqrt{n}} = \frac{\sqrt{n}}{2},$$
    which concludes the proof.
\end{proof}

 \subsection{Weighted Symmetric Subadditive JRP}\label{sec:weighted-concave}

We now relax the assumption of $w(S) = |S|$ and provide a $O(\sqrt{n})$-approximation for every weighted subadditive function.

We begin with some facts about weighted subadditive and symmetric concave functions. Every symmetric concave function is the pointwise infimum of a set of affine functions, and can be approximated by a set of affine functions with exponentially decreasing slopes. The next lemma combines this fact with the fact that every weighted subadditive function can be approximated by a symmetric concave function. 
\begin{lemmaapprep}
    \label{lem:piecewise}
    Let $g:\{0,1\ldots,W\}\rightarrow \R_{\geq 0 }$ be a monotone non-decreasing subadditive function. Then, there exists a finite set of affine functions $\{g_1,\ldots,g_p\}$ for some $p \leq \log(W)$ where $g_i(x) = \sigma_i + x\cdot \delta_i$ such that $\sigma_{i+1} > 2\sigma_i $ and $\delta_{i+1}< \delta_i / 2$ for every $i<p$, and the function $\hat{g}$ defined by $\hat{g}(x) = \min_i g_i(x)$  satisfies that for every $x\in\{0,\ldots,W\}$, it holds that: $$g(x) \leq \hat{g}(x) \leq 8 g(x).$$
\end{lemmaapprep}
\begin{proof}
By \cite{EzraFRS20}, we know that there exists a concave function $g^\prime:\{0,\ldots,W\}\rightarrow \R_{\geq 0}$ that approximates $g$ within a factor of $2$.
Now, for every $i=2,\ldots,\lceil\log(W)\rceil$ consider the affine function $g^\prime_i : \{0,\ldots,W\}\rightarrow \R_{\geq 0}$  that interpolates between $(2^{i-1},g^\prime(2^{i-1}))$ and $(2^{i},g^\prime(2^{i}))$, and $g_1'(x)$ that interpolates between $(0,g^\prime(0))$ and $(1,g^\prime(1))$. It holds that for every $x \in \{0,\ldots,W\}$ then $$\frac{g^\prime(x)}{2} \leq \min_{i=1,\ldots,p} g^\prime_i(x) \leq g^\prime(x),  $$ where the first inequality holds since
\begin{align*}
    g^\prime(x) &\leq g^\prime(2^{\lceil \log(x)\rceil}) \leq 2g^\prime(2^{\lfloor \log(x)\rfloor})   \\
    &\leq 2g^\prime_{2^{\lfloor \log(x)\rfloor}}(2^{\lfloor \log(x)\rfloor}) =  2\min_{i=1,\ldots,p} g^\prime_i(2^{\lfloor \log(x)\rfloor})  \leq 2\min_{i=1,\ldots,p} g^\prime_i(x),
\end{align*}
and the second inequality holds by concavity of $g^\prime$.
In \cite{GuhaMM09}, they present an algorithm that reduces the set of affine functions such that the coefficients and slopes satisfy the conditions of the lemma while losing a factor of $2$, which, if applied to the set of affine functions $2g^\prime_i$, concludes the proof. 
\end{proof}

Henceforth, we will denote by $W=w(U)$, and assume that $f$ is defined on $\{0,\ldots,W\}$, and is a pointwise infimum of {$p$ affine functions $g_1, \ldots, g_p$} where $g_i(x) = \sigma_i + x \cdot \delta_i$ and the $\sigma_i$s and $\delta_i$s satisfy the properties stated in the lemma. Proving the theorem for $f$ that satisfies the condition proves the same (with additional loss of a factor of $8$) for general symmetric subadditive functions. 

The following lemma will be useful to lower bound $f(w(S))$ using the largest weight in $S$.

\begin{lemma}
    \label{lem:crossover}
    For every $k \in \{2,\ldots,p\} $, if  $x \geq \frac{\sigma_k}{\delta_{k-1}}$ , then $f(x) \geq \sigma_k$. \end{lemma}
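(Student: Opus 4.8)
The claim relates the threshold $x \geq \sigma_k/\delta_{k-1}$ to a lower bound $f(x) \geq \sigma_k$. Since $f(x) = \min_i g_i(x) = \min_i (\sigma_i + x\delta_i)$, it suffices to show that $g_i(x) \geq \sigma_k$ for every index $i \in \{1,\ldots,p\}$, splitting into the two natural cases $i \geq k$ and $i < k$.

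First I would handle the "high" indices $i \geq k$. Here the constant term alone already suffices: by the monotonicity of the sequence $(\sigma_i)$ established in Lemma~\ref{lem:piecewise} (we have $\sigma_{i+1} > 2\sigma_i$, so in particular $\sigma_i \geq \sigma_k$ for $i \geq k$), and since $x \geq 0$ and $\delta_i \geq 0$, we get $g_i(x) = \sigma_i + x\delta_i \geq \sigma_i \geq \sigma_k$.

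Next, the "low" indices $i < k$, i.e. $i \leq k-1$. Here I would use the slope term. Since the $\delta_i$ are decreasing ($\delta_{i+1} < \delta_i/2$), for $i \leq k-1$ we have $\delta_i \geq \delta_{k-1}$. Combined with the hypothesis $x \geq \sigma_k/\delta_{k-1}$, this gives $g_i(x) \geq x\delta_i \geq x\delta_{k-1} \geq \sigma_k$, where I drop the nonnegative constant term $\sigma_i$. (One should note $\delta_{k-1} > 0$ so the quotient is well-defined; this holds because the affine functions coming from Lemma~\ref{lem:piecewise} interpolate a monotone non-decreasing function, and if some $\delta_i$ were zero all subsequent slopes would have to be negative, contradicting the construction — or one simply restricts attention to the relevant nonzero-slope pieces.) Taking the minimum over all $i$ yields $f(x) \geq \sigma_k$.

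I do not anticipate a genuine obstacle here — the statement is essentially a bookkeeping consequence of the growth/decay rates $\sigma_{i+1} > 2\sigma_i$ and $\delta_{i+1} < \delta_i/2$ from Lemma~\ref{lem:piecewise}. The only point requiring a little care is the well-definedness of $\sigma_k/\delta_{k-1}$ (positivity of $\delta_{k-1}$), which I would dispatch with a one-line remark as above.
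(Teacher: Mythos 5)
Your proof is correct and takes essentially the same route as the paper: split the minimum defining $f(x)$ into indices $i \geq k$ (where the constant term $\sigma_i \geq \sigma_k$ alone suffices) and $i < k$ (where the slope term $x\delta_i \geq x\delta_{k-1} \geq \sigma_k$ suffices). Your additional remark on the positivity of $\delta_{k-1}$ is a reasonable and harmless piece of extra care.
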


\begin{proof}
Recall that $f(x) = \min_{1 \leq i \leq p} \sigma_i + x\delta_i$. For $i < k$, we have $\sigma_i + x\delta_i\geq x \delta_{k-1} \geq \sigma_k$. For $i \geq k$, we have $\sigma_i + x \delta_i \geq \sigma_k$. Thus, $f(x) \geq \sigma_k$.
\end{proof}

Henceforth, for brevity, we write $f(S)$ to mean $f(w(S))$, for an arbitrary set $S$. In the following, we frequently use the fact that for any set $H$, $f(H) = \min_{1 \leq i' \leq p} \sigma_{i'} + w(S)\delta_{i'} \leq \sigma_i + w(H)\delta_i$ for every $i$.

\paragraph{High-Level Overview.} 
Let $S$ be a set chosen by an adversary, unknown to us. Suppose that $f(S) = \min_{1 \leq i \leq p} \sigma_i + w(S)\delta_i = \sigma_\ell + w(S)\delta_\ell$.
The idea is to construct a partition such that some of the parts that intersect $S$ can be charged to $\sigma_\ell$, and the remaining parts that intersect $S$ can be charged to $w(S)\delta_\ell$. Towards this end, we first classify each type $j$ as follows. We say that type $j$ is \emph{eligible} for \emph{class} $2 \leq k \leq p$ if $w_j \geq \frac{\sigma_k}{\delta_{k-1}}$. All types are eligible for class $1$. Define the \emph{class} of type $j$ to be the largest class it is eligible for and $X_k$ to be the set of class-$k$ types. 

Next, we partition $X_k$ into heavy and light types. The light part $Z_k$ contains all types $j \in X_k$ with $w_j\delta_k \leq \sigma_k/\sqrt{n}$. Since $Z_k$ is light, $f(Z_k) \leq \sigma_k + w(Z_k) \delta_k \leq O(\sqrt{n})\sigma_k$. Also, if  $S \cap X_k \neq \emptyset$, then Lemma~\ref{lem:crossover} implies that $f(S) \geq \sigma_k$. We can then use the fact that $\sigma_k$'s are geometric to show that the total value of the parts $Z_k$ that intersect $S$ is at most $O(\sqrt{n})f(S)$.

Now, consider the heavy types in $X_k$, i.e.~those types $j$ with $w_j\delta_k > \sigma_k/\sqrt{n}$. We further partition these types according to their weights in powers of 2. Let $R_{k,i} = \{j \in X_k \setminus Z_k : w_j \in [2^i, 2^{i+1})\}$. For each weight class $i$, we greedily partition $R_{k,i}$ into as many parts of size $\lceil \sqrt{n} \rceil$ as we can. This produces a collection $F_{k,i}$ of parts of size $\lceil \sqrt{n} \rceil$ and at most one leftover part $G_{k,i}$ of size less than $\sqrt{n}$. We say that a part is \emph{nice} if it belongs to $F_{k,i}$ and the part $G_{k,i}$ a \emph{leftover} part. 

Observe that there are at most $\lceil \sqrt{n} \rceil$ nice parts, each of size at most $\lceil \sqrt{n} \rceil$ and contains types of roughly the same weight. Thus, we can use a similar argument as in the unweighted case to show that the total value of the nice parts that intersect $S$ is at most $O(\sqrt{n})f(S)$. For the leftover parts, we charge the parts $G_{k,i}$ that intersect $S$ with $k < \ell$ to $w(S)\delta_\ell$ and those with $k \geq \ell$ to $\sigma_\ell$.

\begin{algorithm}[H]
\caption{\label{alg:concave_partition} Weighted Symmetric Subadditive Partitioning Algorithm}

\begin{algorithmic}[1]
\FOR{$k=1$ to $p$}
\STATE Create a part $Z_k = \{j \in X_k : w_j \delta_k \leq \sigma_k/\sqrt{n}\}$ \label{line:defn_zk}
\STATE Let $R_{k,i} = \{j \in X_k \setminus Z_k : w_j \in [2^i, 2^{i+1})\}$
\FOR{each $i$}
\STATE Greedily partition $R_{k,i}$ into as many sets of size exactly $\lceil \sqrt{n} \rceil$ as possible
\STATE Let $F_{k,i}$ denote the sets of size of size  $\lceil \sqrt{n} \rceil$
\STATE Let $G_{k,i}$ denote the remaining set of size less than $\sqrt{n}$, if it exists
\STATE Create a part for each set in $F_{k,i}$ and a part for the set $G_{k,i}$
\ENDFOR
\ENDFOR
\end{algorithmic}
\end{algorithm}

We now give the detailed analysis below.

\begin{theoremapprep}\label{thm:weighted-concave}
    For any weighted symmetric subadditive service function $f$, there exists a disjoint service function $g$ that $O(\sqrt{n})$-approximates $f$.
    It can be found in time polynomial w.r.t.\ the weights defining $f$.
\end{theoremapprep}
\begin{proof}
    
Let $S$ be an arbitrary set and suppose $f(S) = \min_{1 \leq i \leq p} \sigma_i + w(S)\delta_i = \sigma_\ell + w(S)\delta_\ell$. We now decompose $w(S)$ using the partition produced by our algorithm. In particular, we have

\[f(S) = \sigma_\ell + \left(\sum_k w(Z_k \cap S) + \sum_{k, i} \sum_{T \in F'_{k,i}} w(T \cap S) + \sum_{k, i} w(G_{k,i} \cap S)\right) \cdot \delta_\ell.\] 

Define $F'_{k, i}$ as the subset of parts in $F_{k, i}$ that intersects with $S$. 
We now show that the algorithm pays at most $O(\sqrt{n})f(S)$. In other words, we will prove that the total value of the parts that intersect $S$ are upper bounded as follows:
\[\sum_{k : Z_k \cap S \neq \emptyset} f(Z_k) + \sum_{k, i} \sum_{T \in F'_{k,i}} f(T) + \sum_{k, i : G_{k,i} \cap S \neq \emptyset} f(G_{k,i})\leq O(\sqrt{n})f(S).\]

\newcommand{\kmax}{k_{\operatorname{max}}}

 We begin by bounding $\sum_{k : Z_k \cap S \neq \emptyset} f(Z_k)$.  Let $\kmax$ be the largest $k$ such that $Z_k \cap S \neq \emptyset$. (If none  exists, then we do not need to bound this term.) We have that $w(S)\cdot \delta_{\kmax-1} \geq w(Z_k \cap S)\cdot \delta_{\kmax-1} \geq \sigma_{\kmax}$. Thus, Lemma~\ref{lem:crossover} implies that $f(S) \geq \sigma_{\kmax}$. On the other hand, 
\begin{align*}
    \sum_{k : Z_k \cap S \neq \emptyset} f(Z_k)
    \leq \sum_{k : Z_k \cap S \neq \emptyset} O(\sqrt{n}) \sigma_k 
    \leq O(\sqrt{n}) \sigma_{\kmax} \leq O(\sqrt{n})f(S).
\end{align*}
where the first inequality follows directly from the definition of $Z_k$ in line~\ref{line:defn_zk} of Algorithm~\ref{alg:concave_partition} and since there are at most $n$ elements in $Z_k$, the second inequality is since the $\sigma_k$'s are geometrically increasing.

Next, we bound $\sum_{k, i} \sum_{T \in F'_{k,i}} f(T)$. Since every set $T \in F'_{k,i}$ has size  $\lceil \sqrt{n} \rceil$, we have $\sum_{k,i} |F'_{k,i}| \leq \sqrt{n}$. Moreover, every $j \in T$ has $w_j \in [2^i, 2^{i+1})$, so $w(T) \leq O(\sqrt{n}) w(T \cap S)$. Thus, we have
\begin{align*}
    \sum_{k, i} \sum_{T \in F'_{k,i}} f(T)
    &\leq \sum_{k, i} \sum_{T \in F'_{k,i}} \sigma_\ell + w(T)\delta_\ell \\
    &\leq \sum_{k,i} |F'_{k,i}| \sigma_\ell + O(\sqrt{n})\sum_{k, i} \sum_{T \in F'_{k,i}}w(T \cap S) \delta_\ell\\
    &\leq O(\sqrt{n}) \left(\sigma_\ell + \sum_{k, i} \sum_{T \in F'_{k,i}} w(T \cap S) \delta_\ell\right)
    \leq O(\sqrt{n}) f(S).
\end{align*}
where the last inequality follows from the fact that all $T \in F^\prime_{k,i}$ are disjoint so we have that $w(S) \geq \sum_{k, i} \sum_{T \in F'_{k,i}} w(T \cap S)$.

We now turn to bounding $\sum_{k, i : G_{k,i} \cap S \neq \emptyset} f(G_{k,i})$. Consider a set $G_{k,i}$ that intersects $S$ for $\ell \leq k \leq p$. Since $G_{k,i}$ is a subset of $X_k \setminus Z_k$ and is at most of size $\sqrt{n}$, we have that $$f(G_{k,i}) \leq \sigma_k + w(G_{k,i})\delta_k \leq O(\sqrt{n}) w(G_{k,i} \cap S) \delta_k.$$ Since $\delta_k \leq \delta_\ell$, we get that 
\[\sum_{k\geq \ell} \sum_{i : G_{k,i} \cap S \neq \emptyset} f(G_{k,i}) \leq 
\sum_{k\geq \ell} \sum_{i : G_{k,i} \cap S \neq \emptyset} O(\sqrt{n}) w(G_{k,i} \cap S)\delta_\ell \leq O(\sqrt{n}) f(S).\]

Finally, when $\ell = 1$, the argument is complete. Let us now consider the case when $\ell > 1$. Consider a set $G_{k,i}$ that intersects $S$ for $1 \leq k < \ell$. We have that $f(G_{k,i}) \leq \sigma_k + w(G_{k,i})\delta_k \leq O(\sqrt{n}) 2^{i+1} \delta_k$. Moreover, since every $j \in X_k$ has $w_j \delta_k < \sigma_{k+1}$, we have that  
\[\sum_{k < \ell} \sum_{i : G_{k,i} \cap S \neq \emptyset} f(G_{k,i})
\leq O(\sqrt{n}) \sum_{k < \ell} \sigma_{k+1} \leq O(\sqrt{n}) \sigma_\ell \leq O(\sqrt{n})f(S).\]

Finally, it is not hard to see that, by design, Algorithm~\ref{alg:concave_partition} can be implemented in polynomial time in the logarithm of the total weight, $\log(w(U))$. This concludes the proof.
\end{proof}

Thus, by Lemma \ref{lem:reduction-to-disjoint}, we get Theorem~\ref{thm:non-clairvoyant-cjrp}.

\begin{toappendix}

\section{Tight Instances against Previous Algorithms}

\subsection{An $\Omega(\sqrt{n\log n})$ Tight Instance for the  Algorithm of~\cite{JiaLNRS05}}

\begin{proposition}
\label{thm:tight}
There exists a weighted set cover instance for which the Universal Set Cover algorithm of \cite{JiaLNRS05} has stretch $\Omega(\sqrt{n \log n})$.
\end{proposition}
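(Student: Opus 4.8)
The plan is to recall Jia et al.'s Universal Set Cover algorithm, which (in the unweighted case) recursively partitions the universe via greedy set cover: it computes a greedy cover of the whole universe, and within each set of that cover, recurses on the elements assigned to it, building a laminar family whose leaves give the assignment $a$. The stretch analysis bounds the cost of covering a query set $X$ by summing, over the $O(\log n)$ recursion levels, a per-level overhead of $O(\sqrt{n})$, giving $O(\sqrt{n}\log n)$ for unweighted instances and $O(\sqrt{n\log n})$ after balancing the recursion depth against set sizes. To show this is tight I would construct a weighted set cover instance — a ``hard'' instance that forces each of $\Theta(\sqrt{\log n})$ (or $\Theta(\log n)$, depending on exact parametrization) recursion levels to contribute an independent $\Theta(\sqrt{n/\log n})$ factor multiplicatively, or additively with no cancellation.

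First I would fix parameters: let the universe have $n$ elements, organized into roughly $L = \Theta(\log n / \log\log n)$ or $L=\Theta(\sqrt{\log n})$ levels matching the algorithm's recursion depth, with branching factor $b$ at each level chosen so $b^L \approx n$ and $b \approx \sqrt{n/\log n}$. At each level I would plant two kinds of sets: (i) ``expensive global'' sets that the greedy step is forced to pick because they momentarily cover the most uncovered elements, but which are a $\Theta(\sqrt{n/\log n})$ factor more costly than the optimal cover of the sub-universe they get charged to; and (ii) cheap ``local'' sets that constitute the true optimum for the adversarial query set but are never selected by greedy because at the time of selection they individually cover too few elements. The weights must be tuned (using powers of the branching factor) so the greedy tie-breaking / maximum-coverage rule consistently prefers the bad sets at every level, and so that the per-level overheads do not telescope away.

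The key steps, in order: (1) state precisely the Jia et al.\ algorithm and its recursive structure, isolating the quantity ``cost of $a(X)$'' as a sum over recursion levels; (2) define the instance $(U,\mathcal{C},c)$ with the layered gadget above, specifying element counts, set memberships, and costs explicitly in terms of $n$ and $b$; (3) trace the algorithm's execution on this instance, showing by induction on the level that the greedy set-cover subroutine picks the expensive sets, hence the laminar decomposition it produces is the ``bad'' one; (4) choose the adversarial query set $X$ (one element from each leaf block, or a carefully chosen sub-block), compute $\OPT(X)$ using the cheap local sets, and lower-bound $c(a(X))$ by summing the expensive-set costs across all $L$ levels that $X$ touches; (5) conclude $c(a(X))/\OPT(X) = \Omega(L \cdot \sqrt{n/L}) = \Omega(\sqrt{nL}) = \Omega(\sqrt{n\log n})$ after plugging in $L=\Theta(\log n)$ (or the appropriate balanced choice).

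The main obstacle I expect is step (3) together with the weight calibration in step (2): one must ensure that greedy is forced — at \emph{every} level simultaneously and regardless of the order in which it breaks ties — to choose the cost-inefficient sets, which requires the coverage counts and the costs to be in a delicate ratio that survives the recursion (the sub-universe handed to a child must again look like a fresh scaled copy of the hard instance). A secondary subtlety is verifying that the overheads genuinely add up rather than being dominated by a single level; this is where making the instance self-similar across scales, so each level contributes the same $\Theta(\sqrt{n/\log n})$ additive term to $c(a(X))$ while $\OPT(X)$ stays $\Theta(1)$ per level, is essential. If the clean self-similar construction turns out not to force greedy correctly, the fallback is to weaken to a two-level or $O(\log\log n)$-level instance and argue the bound is still $\Omega(\sqrt{n\log n})$ by a more ad hoc counting argument, though I expect the layered construction to go through.
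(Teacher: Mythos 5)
Your proposal is built on an incorrect model of the algorithm, and that undermines the whole construction. The Universal Set Cover algorithm of Jia et al.\ that the paper is attacking is \emph{not} a recursive laminar partitioning scheme with $O(\log n)$ levels. It is a single-pass greedy: while the set $U$ of still-unassigned elements is nonempty, pick the set $S$ that minimizes $c(S)/\sqrt{|S\cap U|}$, and assign every element of $S\cap U$ to $S$. There is no recursion, no laminar family, and no ``balancing of recursion depth against set sizes.'' The $O(\sqrt{n\log n})$ upper bound comes from bounding the sum over greedy iterations via Cauchy--Schwarz, not from summing $O(\sqrt n)$ per-level overheads over $O(\log n)$ levels. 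Consequently, an instance engineered to force bad choices at each of $L$ ``levels'' of a tree that the algorithm never builds has no mechanism to produce the lower bound; steps (1), (3), and (5) of your plan don't have a target to latch onto.

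The paper's actual construction is flat and much simpler, and it comes directly from asking when the Cauchy--Schwarz step in the known analysis is tight. Take one cheap set $S$ with $|S|=k$ and $c(S)=1$, and $k$ disjoint sets $S_1,\ldots,S_k$ that partition all $n$ elements, with $|S\cap S_i|=1$, $|S_i|=\lfloor k/(k-i+1)\rfloor$, and $c(S_i)=\sqrt{|S_i|}/\sqrt{k-(i-1)}$. One checks by induction that at iteration $i$ the greedy ratio of $S_i$ ties that of $S$ on the remaining universe, so the algorithm takes all $k$ sets $S_i$ and never $S$. The query $X=S$ has $\OPT(X)=c(S)=1$ but $c(a(X))=\sum_i c(S_i)=\Omega(\sqrt{k}\log k)$; since $\sum_i|S_i|=\Theta(k\log k)$ must equal $n$, one sets $k=\Theta(n/\log n)$ and gets $\Omega(\sqrt{n\log n})$. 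If you want to repair your write-up, discard the multilevel gadget and instead (i) write down the greedy selection rule correctly, (ii) reverse-engineer a single layer of sets whose costs make the $c(S)/\sqrt{|S\cap U|}$ ratios equal at every step, and (iii) make the query set the one cheap set the greedy never picks.
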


\begin{proof}
    The algorithm of \cite{JiaLNRS05} works as follows: while the set $U$ of elements $e$ for  which $f(e)$ is undefined is non-empty, pick the set $S$ that minimizes $\frac{c(S)}{\sqrt{|S \cap U|}}$ and for all $e \in S \cap U$, define $f(e) = S$.

The high-level idea is that \cite{JiaLNRS05}'s analysis uses the Cauchy-Schwarz inequality and the tight instance is created by looking at when the Cauchy-Schwarz inequality is tight. 

Consider the following set system where we have sets $S, S_1, \ldots, S_k$ for some $k$ that we will choose later. The set $S$ contains $k$ elements and set $S_i$ contains $\left\lfloor\frac{k}{k - (i-1)}\right\rfloor$ elements. The sets also satisfy that $|S \cap S_i| = 1$ and $S_i \cap S_j = \emptyset$ for $1 \leq i < j \leq k$.  Moreover, the sets $S_i$ form a partition of all the $n$ elements. The costs of the sets are: $c(S) = 1$, $c(S_i) = \frac{\sqrt{|S_i|}}{\sqrt{k-(i-1)}}$. 

We now claim that in the $i$-th iteration, the algorithm chooses $S_i$. First observe that  for $1 \leq i < j \leq k$, we have
\[\frac{c(S_i)}{\sqrt{|S_i|}} < \frac{c(S_j)}{\sqrt{|S_j|}}.\]
Thus, it suffices to show that in each iteration $i$, the algorithm chooses $S_i$ over $S$. We do this by induction on $i$. When $i=1$, we have that 
\[\frac{c(S_1)}{\sqrt{|S_1|}} = \frac{1}{\sqrt{k}} = \frac{c(S)}{\sqrt{|S|}}.\] Now consider $i > 1$. By induction, we have that $|S \cap U| = k - (i-1)$ and $S_i \cap U = S_i$ (the latter is because the only set that intersects $S_i$ is $S$). Thus, we also have
\[\frac{c(S_i)}{\sqrt{|S_i|}} = \frac{1}{\sqrt{k - (i-1)}} = \frac{c(S)}{\sqrt{|S \cap U|}}.\] We conclude that in each iteration $i$, the algorithm chooses $S_i$.

Thus, the competitive ratio of the algorithm is at least 
$(\sum_{i=1}^k c(S_i))/c(S) = \sum_{i=1}^k c(S_i)$ since $c(S) = 1$. We have that 
\begin{equation}
\label{eq:k}
    \sum_{i=1}^k c(S_i) =\sum_{i=1}^k \frac{\sqrt{\left\lfloor\frac{k}{k - (i-1)}\right\rfloor}}{\sqrt{k-(i-1)}} = 
\Omega(\sqrt{k} \log k). 
\end{equation}

It now remains to maximize $k$. The constraint on $k$ is that $\sum_{i =1}^k |S_i| = n$ since $S_i$s are disjoint. Now, $\sum_{i=1}^k |S_i| = \sum_{i=1}^k \left\lfloor\frac{k}{k - (i-1)}\right\rfloor = \Theta(k \log k)$. Thus, setting $k = \Theta(n/\log n)$ satisfies the constraint on $k$. Plugging this into \eqref{eq:k} yields the claim.
\end{proof}

 \end{toappendix}
\begin{toappendix}
\subsection{An $\Omega(\sqrt{n\log n})$ Tight Instance for the Algorithm of~\cite{Touitou23}}

We complement the $O(\sqrt{n})$-stretch achieved by Algorithm~\ref{alg:mla_partition} and Algorithm~\ref{alg:concave_partition} with a JRP instance such that the algorithm of \cite{Touitou23} (Algorithm 2) must suffer a stretch of at least $\Omega(\sqrt{n\log n})$. Note that the instance we present in Figure~\ref{fig:mla-touitou} is both an MLA instance and a weighted concave one. This shows that for the specific case of MLA and weighted concave functions, not only is our algorithm optimal, but also that Touitou's algorithm cannot achieve the same guarantee. At a high level, whenever Touitou's algorithm decides to serve some requests, it issues up to two services (lines 9 and 12). One of them serves a subset of requests $R$ for which delay and service costs are the same. At the same time, a second service with a budget of up to $\sqrt{n \log n} \cdot c(R)$ can be issued to serve some pending requests in advance. The following example is one where the optimal algorithm rarely issues this second service.

\begin{figure}
    \centering
    \resizebox{0.2\linewidth}{!}{

\tikzset{every picture/.style={line width=0.75pt}} 

\begin{tikzpicture}[x=0.75pt,y=0.75pt,yscale=-1,xscale=1]

\draw   (312,35) .. controls (312,32.24) and (314.24,30) .. (317,30) .. controls (319.76,30) and (322,32.24) .. (322,35) .. controls (322,37.76) and (319.76,40) .. (317,40) .. controls (314.24,40) and (312,37.76) .. (312,35) -- cycle ;
\draw   (312,88) .. controls (312,85.24) and (314.24,83) .. (317,83) .. controls (319.76,83) and (322,85.24) .. (322,88) .. controls (322,90.76) and (319.76,93) .. (317,93) .. controls (314.24,93) and (312,90.76) .. (312,88) -- cycle ;
\draw   (363,123) .. controls (363,120.24) and (365.24,118) .. (368,118) .. controls (370.76,118) and (373,120.24) .. (373,123) .. controls (373,125.76) and (370.76,128) .. (368,128) .. controls (365.24,128) and (363,125.76) .. (363,123) -- cycle ;
\draw   (261,123) .. controls (261,120.24) and (263.24,118) .. (266,118) .. controls (268.76,118) and (271,120.24) .. (271,123) .. controls (271,125.76) and (268.76,128) .. (266,128) .. controls (263.24,128) and (261,125.76) .. (261,123) -- cycle ;
\draw    (317,93) -- (368,118) ;
\draw    (317,93) -- (266,118) ;
\draw    (317,40) -- (317,83) ;

\draw (287,111.4) node [anchor=north west][inner sep=0.75pt]    {$\dotsc \dotsc \dotsc $};
\draw (351,93.4) node [anchor=north west][inner sep=0.75pt]    {$w$};
\draw (273,92.4) node [anchor=north west][inner sep=0.75pt]    {$w$};
\draw (324,53.4) node [anchor=north west][inner sep=0.75pt]    {$1$};
\draw (259,135.4) node [anchor=north west][inner sep=0.75pt]    {$v_{2}$};
\draw (360,134.4) node [anchor=north west][inner sep=0.75pt]    {$v_{n}$};
\draw (310,5.4) node [anchor=north west][inner sep=0.75pt]    {$v_{0}$};
\draw (327,76.4) node [anchor=north west][inner sep=0.75pt]    {$v_{1}$};

\end{tikzpicture} }
    \caption{Tight instance for \cite{Touitou23}, where $w = \frac{\sqrt{n \log n} - 1}{n-1}$.}
    \label{fig:mla-touitou}
\end{figure}
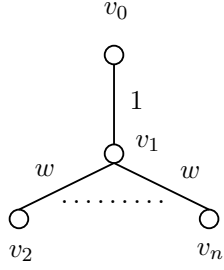

\begin{proposition}
\label{prop:tight-touitou}
There exists an instance for which the algorithm of \cite{Touitou23} has stretch $\Omega(\sqrt{n \log n})$. Moreover, this is an MLA and a weighted concave instance. 
\end{proposition}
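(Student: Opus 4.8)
The plan is to analyze the behavior of Touitou's algorithm (Algorithm 2 of \cite{Touitou23}) on the instance in Figure~\ref{fig:mla-touitou}, which is a star: the root $v_0$ has a single child $v_1$ with $c(v_1)=1$, and $v_1$ has $n-1$ children $v_2,\ldots,v_n$ each with cost $w = \frac{\sqrt{n\log n}-1}{n-1}$. First I would fix a request pattern designed so that the optimal offline solution almost never wants to prefetch: requests arrive one at a time at the leaves $v_2,\ldots,v_n$, spaced far enough apart in their delay accumulation that serving them in batches is never worthwhile for $\OPT$. Concretely, I would set the delay functions so that each leaf request, when it triggers a service, has accumulated exactly delay $w$ (so the service cost $c(\{v_1,v_i\}) = 1 + w$ is balanced against delay in the way Touitou's charging scheme expects), but the requests are temporally separated so that no two are ever pending simultaneously. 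Then $\OPT$ simply serves each leaf request individually upon triggering, paying $\Theta(N)$ total over $N$ requests, with essentially no benefit from the second (prefetching) service.

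Next I would trace through what Touitou's algorithm does. When a leaf request at $v_i$ triggers a service, the algorithm identifies a set $R$ with matched delay and service cost (here $R$ corresponds to the path cost $1+w \approx 1$), and then issues a \emph{second} service with budget up to $\Theta(\sqrt{n\log n}) \cdot c(R) = \Theta(\sqrt{n\log n})$. With this budget, since each additional leaf costs only $w = \frac{\sqrt{n\log n}-1}{n-1}$ on top of the already-paid edge to $v_1$, the algorithm can afford to prefetch roughly $\frac{\sqrt{n\log n}}{w} \approx n$ leaves — i.e., essentially \emph{all} of them. So on a single trigger the algorithm pays $\Theta(\sqrt{n\log n})$ but serves only one ``real'' request plus a prefetch of many leaves that have not yet accumulated meaningful delay. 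The key quantitative point is that the ratio between what the algorithm spends on such a service, $\Theta(\sqrt{n\log n})$, and what $\OPT$ spends servicing the corresponding request, $\Theta(1)$, is already $\Omega(\sqrt{n\log n})$; to make this a genuine lower bound on the competitive ratio (not just on one step) I would have the adversary issue new leaf requests \emph{after} the prefetched service has lapsed (i.e., re-request the same leaves once enough time has passed), forcing the algorithm to repeat this wasteful prefetch over many rounds while $\OPT$ continues to pay $\Theta(1)$ per request. Summing over $T$ rounds, $\ALG = \Omega(T\sqrt{n\log n})$ while $\OPT = O(T)$, hence a stretch of $\Omega(\sqrt{n\log n})$.

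To conclude I would verify the two structural claims: (i) the instance is a valid MLA instance — immediate, since it is literally given by a rooted tree with node costs; and (ii) it is a weighted symmetric subadditive (``weighted concave'') instance — here I would note that the service cost of any nonempty set of leaf types is $1 + (\text{number of distinct leaves served}) \cdot w$, which is an affine, hence concave, hence subadditive function of the cardinality (equivalently of the weight, taking all leaf weights equal and $v_1$'s weight absorbed into the constant term); one should double-check that the degenerate root/child-edge contributes only the additive constant $1$ and does not break the symmetric-in-weight form, which is routine. The main obstacle I anticipate is not the arithmetic but pinning down the exact bookkeeping of Touitou's algorithm: one must argue carefully that on the chosen request sequence the algorithm genuinely reaches line 12 and spends its full $\Theta(\sqrt{n\log n})$ prefetch budget every round (rather than, say, being throttled by some guard in the algorithm that limits prefetching when few requests are pending), and that the re-request schedule is consistent with the non-clairvoyant delay model (delays revealed only as they accrue). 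Getting the constant $w = \frac{\sqrt{n\log n}-1}{n-1}$ to make the prefetch reach \emph{almost exactly} all $n-1$ leaves — so that the spend is $\Theta(\sqrt{n\log n})$ and not merely $\omega(1)$ — is the delicate calibration that drives the bound.
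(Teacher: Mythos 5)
Your choice of instance coincides with the paper's (Figure~\ref{fig:mla-touitou}): a depth-two star with $c(v_1)=1$ and leaf weight $w = \frac{\sqrt{n\log n}-1}{n-1}$. But the request schedule you propose is internally inconsistent at exactly the point that has to carry the argument. You stipulate that the leaf requests ``are temporally separated so that no two are ever pending simultaneously,'' and simultaneously that Touitou's second (prefetching) service ``can afford to prefetch roughly $n$ leaves --- essentially all of them.'' These cannot both hold: the algorithm is non-clairvoyant, and line~12's prefetch service can only transmit types for requests that are \emph{already pending}. If the triggering leaf is the sole pending request, the second service has nothing to serve and spends $\Theta(1)$, not $\Theta(\sqrt{n\log n})$. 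You raise this very worry yourself (``rather than, say, being throttled by some guard \ldots when few requests are pending'') and treat it as bookkeeping, but it is not: with your schedule the budget goes unspent every round and the lower bound collapses. The ``re-request after the prefetch lapses'' patch does not help, because within any single round you still have at most one pending leaf.

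The paper's construction avoids this by decoupling the trigger from the prefetch target, using a node your schedule leaves out. At every time step a request arrives at \emph{every} node. The $v_1$-requests carry a fast delay $d_1(t)=2t$, so each becomes critical after half a time unit and fires the primary service on the cheap set $R=\{v_1\}$ with $c(R)=1$; the leaf requests carry a tiny delay rate $\varepsilon t$ with $\varepsilon=w/n$, so they all remain pending essentially throughout. Consequently every primary trigger sees $n-1$ pending leaves and the second service does spend its full $\sqrt{n\log n}\cdot c(R)$ budget. Meanwhile $\OPT$ serves $v_1$ alone each step (cost $1$) and batches the leaves once every $w/\varepsilon=n$ steps (an extra $\sqrt{n\log n}$ each time), giving $\OPT(\tau)\le 2\tau$ while $\ALG(\tau)=\tau\sqrt{n\log n}$. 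Your plan has no node playing $v_1$'s fast-delay triggering role, so there is no way to keep $\OPT$'s per-round cost at $\Theta(1)$ while also handing the algorithm a full pending set to waste its budget on. Adding that $v_1$-trigger with fast delay is the essential missing idea; the calibration of $w$, which you flag as the delicate step, is comparatively routine.
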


\begin{proof}
Let us consider the JRP tree $T$ in Figure~\ref{fig:mla-touitou}, where $w = \frac{\sqrt{n \log n} - 1}{n-1}$ and the delay cost functions on the nodes read
\[  
    d_i(t) = \begin{cases}
        2t, \text{ if } i = 1\\
        \varepsilon t, \text{ if } i \geq 2
    \end{cases},
\]
for $\varepsilon \ll w$ to be set later. In particular, at each time step, there are $n$ requests arriving on tree $T$, one per node. 

Let us first observe that the optimum algorithm only serves the requests at $v_1$, paying a service cost of $1$ at each time step. Moreover, it serves requests arriving at any $v_i$ with $i \geq 2$ once $\varepsilon t = w$, i.e., every $w/\varepsilon$ time steps, and pays $(n-1)w + 1 = \sqrt{n \log n}$. Thus, letting $\tau$ be the length of the requests sequence, the overall optimal cost is $\OPT(\tau) = \tau + \frac{\varepsilon \tau}{w} \sqrt{n \log n} \leq 2\tau$, by setting $\varepsilon = w/n$.

Algorithm 2 in \cite{Touitou23} (whose cost is referred to as $\ALG$ from now on) serves a request arriving at $v_1$ (line 9) as soon as its accumulated delay equals its service cost (this is when the {\scshape UponCritical} event occurs). Once a request at $v_1$ arrives, the algorithm waits until the time elapsed $t$ is such that $2t = 1$ to serve it. That is, when the $j$-th request located at $v_1$ arrives, the algorithm serves it at time $t_j = j + \frac{1}{2}$. Right after, it issues a second service (line 12) to serve all other requests at $v_2, \ldots, v_n$. Overall, the algorithm pays $\ALG(\tau) = \tau \cdot (1 + (n-1)w) = \tau \sqrt{n \log n}$. 

Hence, $$\frac{\ALG(\tau)}{\OPT(\tau)} \geq \frac{\sqrt{n \log n}}{2},$$ for all $\tau \geq 1$. To conclude, the fact that the instance in Figure~\ref{fig:mla-touitou} is an MLA one comes directly from the fact that it is a depth 2 tree. Moreover, observe that no matter how we choose $S \subseteq V$, $f(S) = f(w(S))$, and thus the instance in Figure~\ref{fig:mla-touitou} is also a weighted concave instance.
\end{proof} \end{toappendix}

\section*{Acknowledgments}
Tomer Ezra is supported by Harvard University Center of Mathematical Sciences and Applications.
Stefano Leonardi and Matteo Russo are partially supported by the ERC Advanced Grant 788893 AMDROMA ``Algorithmic and Mechanism Design Research in Online Markets'' and MIUR PRIN project ALGADIMAR ``Algorithms, Games, and Digital Markets''.
Part of this work was done when Seeun William Umboh was visiting the Sapienza University of Rome, and at the School of Computer Science, University of Sydney.

\clearpage
\bibliographystyle{alpha}
\bibliography{sample}

\end{document}